\newcommand{\blind}{1}
\newtheorem{theorem}{Theorem}
\newtheorem{remark}{Remark}
\newtheorem{lemma}{Lemma}
\newtheorem{prop}{Proposition}
\newtheorem{assumption}{Assumption}
\def\spacingset#1{\renewcommand{\baselinestretch}%
{#1}\small\normalsize} \spacingset{1}
\newcommand{\argmin}{\operatorname*{\arg\min}}
\newcommand{\E}{\operatorname{E}} 
\newcommand{\Var}{\operatorname{Var}}
\newcommand{\bb}{\mathbb}
\newcommand{\R}{\bb R}
\newcommand{\y}{\bm y}
\newcommand{\x}{\bm x}
\newcommand{\Z}{\bm Z}
\newcommand{\X}{\bm X}
\newcommand{\z}{\bm z}
\newcommand{\U}{\bm U}
\begin{document}

\def\spacingset#1{\renewcommand{\baselinestretch}%
{#1}\small\normalsize} \spacingset{1}

\date{}


\if1\blind
{
  \title{\bf{Conditional Data Synthesis Augmentation}
\thanks{This work was supported in part by NSF grant DMS-1952539 and NIH grants
R01AG069895, R01AG065636, R01AG074858, U01AG073079. (Corresponding author: Xiaotong Shen.) }}

\author{Xinyu Tian \thanks{Xinyu Tian is with the School of Statistics, University of Minnesota, MN, 55455 USA (email:
tianx@umn.edu)}, Xiaotong Shen \textsuperscript{\orcidlink{0000-0003-1300-1451}} \thanks{Xiaotong Shen is with the School of Statistics, University of Minnesota, MN, 55455 USA (email: xshen@umn.edu)}.}

  \maketitle
} \fi

\if0\blind
{
  \bigskip
  \bigskip
  \bigskip
  \begin{center}
    {\LARGE\bf Conditional Data Synthesis Augmentation}
\end{center}
  \medskip
} \fi

\bigskip
\begin{abstract}
Reliable machine learning and statistical analysis rely on diverse, well-distributed training data. However, real-world datasets are often limited in size and exhibit underrepresentation across key subpopulations, leading to biased predictions and reduced performance, particularly in supervised tasks such as classification. To address these challenges, we propose Conditional Data Synthesis Augmentation (CoDSA), a novel framework that leverages generative models, such as diffusion models, to synthesize high-fidelity data for improving model performance across multimodal domains, including tabular, textual, and image data. CoDSA generates synthetic samples that faithfully capture the conditional distributions of the original data, with a focus on under-sampled or high-interest regions. Through transfer learning, CoDSA fine-tunes pre-trained generative models to enhance the realism of synthetic data and increase sample density in sparse areas. This process preserves inter-modal relationships, mitigates data imbalance, improves domain adaptation, and boosts generalization. We also introduce a theoretical framework that quantifies the statistical accuracy improvements enabled by CoDSA as a function of synthetic sample volume and targeted region allocation, providing formal guarantees of its effectiveness. Extensive experiments demonstrate that CoDSA consistently outperforms non-adaptive augmentation strategies and state-of-the-art baselines in both supervised and unsupervised settings.
\end{abstract}

\noindent%
{\it Keywords:}  Data augmentation; Generative models; Multimodality; Transfer learning; Natural language processing; Unstructured data.
\vfill

\newpage
\spacingset{1.9} 
\section{Introduction}
Data augmentation is primarily utilized to improve the performance and robustness of machine learning models. This technique artificially generates diverse variations of training data through transformations while preserving the original labels. For instance, in image recognition tasks, one method involves flipping or rotating an image of a cat to create additional training samples, ensuring these variations are still accurately recognized as cats. When implemented effectively, data augmentation can significantly enhance model performance \citep{shorten2019survey}.

This technique has demonstrated strong empirical effectiveness in both image and text domains. For instance, translations, horizontal reflections, and intensity alterations were employed in \cite{krizhevsky2012imagenet} to augment data, thereby enhancing the accuracy of convolutional neural networks for image classification. Likewise, \cite{he2019bag} explored augmentation strategies such as Cutout, Mixup, and CutMix, which led to substantial performance gains. Specifically, the Cutout technique introduced by \cite{devries2017improved} involves randomly masking square regions of the input during training, thereby increasing robustness and accuracy. In natural language processing, \cite{xie2019unsupervised} applied back-translation to boost performance across various tasks. More recently, the focus has shifted to multimodal data augmentation—generating synthetic data that jointly models multiple modalities such as images, text, and audio \citep{baltruvsaitis2019multimodal, ramesh2021zero}. However, multimodal augmentation introduces unique challenges due to the inherent heterogeneity of data types, alignment issues, and complex dependencies that traditional methods do not address. An open question remains: can such data augmentation techniques further improve performance?

To address the challenges of data scarcity and representational imbalance in multimodal data, we propose a novel augmentation framework, Conditional Data Synthesis Augmentation (CoDSA). CoDSA leverages advanced generative modeling to expand training sample sizes while preserving the underlying data distribution. It operates by fine-tuning pre-trained generative models through transfer learning, enhancing the fidelity and relevance of synthetic data generation. A key feature of CoDSA is its use of conditional generation, guided by sample partitions defined by variables of interest, typically associated with underrepresented subpopulations. For example, in healthcare applications, such partitions may be based on demographic or clinical subgroups relevant to risk prediction.

By tailoring augmentation strategies to these targeted subgroups, CoDSA promotes model generalization in regions of the sample space where data are scarce or biased. Although CoDSA is compatible with a wide range of generative models, our implementation focuses on diffusion models due to their state-of-the-art performance in generating high-fidelity synthetic data \citep{dhariwal2021diffusion}. Our empirical results highlight the advantages of CoDSA over both non-augmentation baselines and non-adaptive approaches, the latter of which generate data unconditionally without prioritizing subpopulations of interest \citep{shen2023boosting}. In contrast, \cite{nakada2024synthetic} utilizes transformer-based language models for unconditional tabular data generation and subsequently extracts minority group samples to correct class imbalance and reduce spurious correlations in binary classification. Whereas both \cite{shen2023boosting} and \cite{nakada2024synthetic} generate data without direct conditioning, CoDSA directly employs conditional diffusion models \citep{ho2020denoising, kotelnikov2023tabddpm}, enabling adaptive, multimodal augmentation that is broadly applicable across supervised and unsupervised learning settings. This approach also provides a unified strategy for synthesizing structured and unstructured data, enhancing its utility in diverse domains.

The contributions of this paper are as follows:

1). \textbf{Adaptive synthetic data augmentation for multimodal data:} 
CoDSA employs a conditional generative model, such as a diffusion model, to generate multimodal synthetic samples tailored to user-defined regions of interest. These regions, specified by key variables (e.g., responses, features, or both), are selectively augmented to enhance various learning tasks. By focusing on areas requiring improvement, CoDSA dynamically adjusts the synthetic data volume to balance two critical trade-offs: (i) estimation and approximation errors related to data volume and (ii) the domain adaptation index alongside the weighted generation error associated with synthetic data allocation. Numerical examples in supervised and unsupervised tasks, supported by the statistical learning theory in Theorem~\ref{thm1}, demonstrate CoDSA’s superiority over traditional approaches. Consequently, when high-fidelity synthetic data is employed, CoDSA is expected to yield significant performance improvements compared to non-augmentation baselines.

2). \textbf{Optimal augmentation:} 
Unlike heuristic oversampling methods, such as SMOTE \citep{chawla2002smote} and its refinement ADASYN \citep{he2008adasyn} for imbalanced classification, or SMOGN for imbalanced regression \citep{Branco2017SMOGN}, CoDSA adaptively generates synthetic samples tailored to each local region of the feature space. It tunes the amount of augmentation via performance on a held-out validation set, recognizing that more synthetic data does not automatically translate into better results when generation errors accumulate. Extensive experiments demonstrate that CoDSA consistently exceeds both non-adaptive baselines and current state-of-the-art approaches across a range of multimodal tasks, including imbalanced classification, cross-domain sentiment analysis, facial age estimation, and image-based anomaly detection.

3). \textbf{Statistical efficiency and transfer learning:} 
CoDSA improves the fidelity of synthetic data relative to the original data by incorporating transfer learning into the generative modeling process. This aspect enhances model efficiency and overall learning accuracy. Our theoretical framework offers rigorous statistical insights into the key factors driving performance gains through data augmentation in multimodal settings, effectively bridging theory and practice. To our knowledge, this is the {\it first framework of its kind}. These findings confirm that high-fidelity synthetic data, facilitated through transfer learning, can significantly enhance model performance.

This article consists of five sections. Section \ref{sec:ssv} introduces the CoDSA methodology, emphasizing conditional synthetic data generation for multimodal datasets. Section \ref{sec:theory} develops a general theoretical framework highlighting CoDSA’s advantages over non-adaptive methods and presents illustrative examples of domain adaptation with latent diffusion models in imbalance classification. Section \ref{sec:numerical} provides numerical experiments demonstrating CoDSA’s performance relative to strong competitors
in the literature. Section \ref{discussion} discusses CoDSA’s broader implications and outlines directions for future research. Finally, the Supplementary Appendix includes technical proofs and experiment details.

\section{Multimodal data synthesis augmentation} \label{sec:ssv}

Consider a machine learning task where the training data $\Z$ is drawn from a distribution $P$, and $\Z$ may comprise multimodal data types such as tabular, image, and textual inputs. Model performance is evaluated on a separate distribution $Q$, which may differ from the training distribution $P$. The variable $\Z$ can represent the joint sample $(\X, Y)$ in supervised learning, or $\Z$ in unsupervised settings. This general formulation extends the classical machine learning paradigm, which assumes $P=Q$, to more realistic scenarios where this assumption no longer holds. In practice, such distributional shifts between $P$ and $Q$ are common and can significantly degrade model reliability and predictive accuracy.

 To illustrate the importance of addressing these shifts, we highlight two representative cases:

{\bf Imbalanced classification.} In binary classification tasks where $\Z=(\X, Y)$, with $\X$ denoting the predictor variables and $Y \in \{0, 1\}$ the class labels, class imbalance occurs when one class, typically the minority, is underrepresented in the training data. While the evaluation distribution $Q$ may assume balanced class proportions, the training distribution $P$ is often skewed toward the majority class. This mismatch can introduce bias during model training, resulting in poor generalization to the evaluation distribution.

{\bf Domain adaptation.} Domain adaptation refers to settings in which a model trained on a source domain must generalize to a target domain that exhibits different distributional characteristics. For example, a model trained on clean images may struggle when deployed on noisy or degraded images. In the supervised setting with $\Z = (\X, Y)$, this may take the form of a response shift, where the marginal distribution of the labels $Y$ differs between $P$ and $Q$. In contrast, the conditional distribution $P(\X \mid Y)$ remains unchanged. Conversely, under covariate shift, it is the marginal distribution of the features $\X$ that changes between domains, with the conditional distribution preserved: $P(Y \mid \X) = Q(Y \mid \X)$ \citep{sugiyama2007covariate}.

These examples motivate a central question: How can we design principled data augmentation methods to mitigate data scarcity and class imbalance, thereby enhancing model robustness under distributional shift?

To address this, we propose a generative augmentation framework that synthesizes high-fidelity data consistent with the original distribution. This approach enriches the training dataset by generating representative and diverse samples that account for underrepresented subpopulations or domain-specific variations. Unlike traditional augmentation techniques, often limited to simple transformations (e.g., rotations, flips, or token substitutions), our method leverages deep generative models to capture and replicate complex, multimodal data structures.

In what is to follow, we introduce a conditional generative framework designed to augment multimodal datasets adaptively. This method applies to various data modalities and task settings and offers a unified, statistically grounded approach to improving generalization.

\subsection{Conditional synthetic data generation}
\label{CoDSA}

Ideally, multimodal synthetic data can be generated using advanced generative models, such as diffusion 
models \citep{kotelnikov2023tabddpm} or normalizing flows \citep{dinh2016density,kingma2018glow}, trained on original datasets and further enriched through pretrained models. However, 
some subpopulations defined by certain variables of interest are undersampled, leading to data scarcity for specific groups. Addressing data scarcity through targeted strategies is crucial for reliable model performance.

To address data scarcity effectively, we partition the sample space $\Omega$ of $(\Z_i)_{i=1}^n$ into $K$ regions, $\Omega=\cup_{k=1}^K C_k$, each representing distinct subpopulations. Some regions $C_k$ may be particularly scarce, necessitating targeted data augmentation. For instance, in binary classification, one might define partitions $C_1 = \{Y=1\}$ and $C_2 = \{Y=0\}$ to represent minority and majority classes, respectively. Targeted synthetic data generation within these partitions directly addresses data scarcity, improves minority representation, reduces bias toward the majority class, and enhances overall generalization, as shown in our numerical examples in Section \ref{im-classification}.

Specifically, conditional synthetic data generation entails sampling synthetic data $\tilde \Z_{kj}$; $j = 1, \dots, m_k$, from the conditional distribution $P(\Z \mid \Z \in C_k)$. This approach addresses data scarcity by enriching regions with limited coverage. Models such as diffusion and normalizing flows are particularly well-suited for this purpose, and in Section \ref{transfer-diffusion}, we detail conditional diffusion generation. Moreover, we can enhance the accuracy of diffusion generation by employing transfer learning to fine-tune pretrained models derived from relevant studies.

\subsection{Multimodal data augmentation}

The preceding discussion motivates the proposed methodology: \textbf{Conditional Data Synthesis Augmentation (CoDSA)}. CoDSA is structured into three steps:

\begin{enumerate}
\item[] \textbf{Step 1: Sample splitting:} 
To prevent data reuse or data snooping \citep{wasserman2009high, wasserman2020universal}, we randomly partition the training sample of size $n$ into two subsamples with a split ratio $r\in [0,1]$. The first subsample, denoted by $\Z_{g}$ (of size $n_{g} =\lfloor r n\rfloor$ ), is designated for training the generative model. The second subsample, denoted by $\Z_{r}$ 
(of size $n_{r}=n- \lfloor r n\rfloor$), is reserved for subsequent data augmentation.

\item[] \textbf{Step 2: Training a generator:} 
Using $\Z_{g}$, we train or fine-tune a generator, such as a diffusion model, to replicate the underlying data distribution. Once successfully trained, the generative model produces high-fidelity synthetic samples to augment the original dataset.

\item[] \textbf{Step 3: Data augmentation:} 
We combine the reserved subset \( \Z_{r} \) with synthetic samples \( \tilde \Z \) (of size \( m \)), generated by the trained generative model, to form the augmented dataset $\Z_{c} = \Z_{r} \cup \tilde \Z$.

For estimation, the empirical loss on the augmented sample is defined as
\begin{equation}
\label{CoDSA1}
L_{m,n_r}(\theta,\Z_{c}) = \frac{1}{m+n_{r}} \sum_{k=1}^K\Big( \sum_{j=1}^{m_k} \ell\big(\theta, \tilde \Z_{kj}\big)
+\sum_{j=1}^{n_{r,k}} \ell\big(\theta, \Z_{kj}\big)\Big),
\end{equation}
where \(\ell\) is a loss function that measures the estimation error for the model \(\theta\) based on multimodal data (see Section~\ref{sec:numerical} for an example). Here, \(\Z_{kj}\) (resp.\ \(\tilde{\Z}_{kj}\)) denotes the \(j\)-th sample in region \(C_{k}\) from the reserved (resp.\ synthetic) data, and \(n_{r,k}\) (resp.\ \(m_k\)) is the sample size of \(C_k\) in the reserved (resp.\ synthetic) data.

 Minimizing $L_{m,n_r}$ with respect to $\theta$ over the parameter space $\Theta$ yields the estimator
$
\hat{\theta}_{\lambda} = \arg\min_{\theta \in \Theta} L_{m,n_r}(\theta,\Z_{c}).
$
Here, the tuning parameter is denoted by \( \lambda = (\bm{\alpha}, m, r) \), with \( \bm{\alpha} = (\alpha_1, \ldots, \alpha_K) \) serving as an allocation vector (where \( \alpha_k = m_k/m \)) and \( r \) representing the split ratio.
\end{enumerate}

Below is the algorithm for CoDSA.

\begin{algorithm}[H]
\caption{CoDSA: Conditional Data Synthesis Augmentation}
\label{alg:CoDSA-Mixed}
\begin{algorithmic}[1]
\Require Training data $\{\Z_i\}_{i=1}^n$, partition $(C_k)_{k=1}^K$, allocation $\bm{\alpha}=(\alpha_1,\dots,\alpha_K)$ with $\sum \alpha_k=1$, synthetic size $m$ (with $m_k=\alpha_k m$), split ratio $r\in(0,1]$, loss $\ell(\cdot,\cdot)$, parameter space $\Theta$.
\Ensure Estimated parameter $\hat{\theta}$ with tuning $\lambda=(\bm \alpha,m,r)$ 

\State \textbf{Sample split:} Randomly stratified partition $\Z$ into $\Z_{g}$ and $\Z_{r}$ with
a ratio of $r$.
\State \textbf{Generator training:} Train a generative model on $\Z_{g}$ to learn $P(\Z\mid \Z\in C_k)$.
\State \textbf{Synthetic data:} For each $k$, generate $\tilde \Z=\{\tilde \Z_{kj}\}_{j=1}^{m_k}$ using the trained model.
\State \textbf{Data mixing:} Form $\Z_{c} = \Z_{r} \cup \tilde \Z$.
\State \textbf{Estimation:} Compute 
$\hat{\theta}_{\lambda}=\arg\min_{\theta\in\Theta} L_{m,n_r}(\theta,\Z_{c})$.
\State \Return $\hat{\theta}_{\lambda}$.
\end{algorithmic}
\end{algorithm}

The tuning parameter \( \lambda \) is optimized on a validation set of the evaluation distribution \( Q \), yielding the minimizer \( \hat{\lambda} \). The final estimator is given by $\hat{\theta}_{\hat{\lambda}}$.
Theoretical insights into the optimal choice of \( \lambda \) are provided in Theorem~\ref{thm1} in Section~\ref{sec:theory}.

CoDSA builds upon the strengths of the generation model by integrating real and synthetic data, ensuring more reliable generalization, especially when the generative model is insufficiently trained or when relying solely on training data for estimation lacks robustness.

\section{Statistical learning theory} 
\label{sec:theory}

\subsection{Model accuracy of CoDSA}
\label{sec:g-theory}

We define the true parameter as $\theta_0 = \arg\min_{\theta \in \Theta} R_Q(\theta)$, with $R_Q(\theta) =
\mathrm{E}_Q \ell(\theta,\Z)$, where $\Theta$ is the parameter space, $\ell$ is the loss function,
and $\mathrm{E}_Q$ is the expectation under the distribution $Q$. Synthetic samples $(\tilde \Z_{j})_{j=1}^m$ are
assumed conditionally independent given the original data $(\Z_i)_{i=1}^{n}$.

\begin{assumption}[Partition validity]
\label{A-0}
The distributions \(P\) and \(Q\) share the same conditional distribution on each region \(C_k\); that is,
$
Q(\Z \mid \Z \in C_k) = P(\Z \mid \Z \in C_k), \quad \text{for } k = 1, \dots, K.$
\end{assumption}

Assumption~\ref{A-0} rules out distributional shift \emph{within} every region $C_k$. Consequently, replacing real observations by synthetic ones cannot bias the conditional risk inside each region, and the overall bias is forced to come from cross-region mass reallocation, which is later controlled via the generator error term $G_{\lambda}$.  The assumption is first used in the bias–variance decomposition, where expectations under $P$ are swapped for those under
$Q$ on $C_k$.

\begin{assumption}[Lipschitz continuity and boundedness]
\label{A-1}
There exists a constant $\beta>0$ such that
$
\sup_{\theta \in \Theta,\;\z\neq\z'} \frac{|\ell(\theta,\z) - \ell(\theta,\z')|}{d(\z,\z')} \le \beta$,
and a constant $U>0$ so that $\sup_{\theta\in\Theta,\;\z} |\ell(\theta,\z)| \le U$, 
where $d(\z, \z')$ is a suitable distance metric on $\z$. For example, $d$ may be the Euclidean distance for tabular $\z$, or an embedding-based distance \citep{zhang2018perceptual} for image data.
\end{assumption}

Lipschitzness controls the sensitivity of the loss to small input perturbations, while boundedness prevents heavy-tailed behavior.

\begin{assumption}[Variance control]
\label{A-2}
For some $c_v > 0$, the variance satisfies
$\Var_Q\big(\ell(\theta, \Z) - \ell(\theta_0, \Z)\big)
\le c_v\, \rho(\theta,\theta_0)$,
with excess risk $\rho(\theta,\theta_0)=R_Q(\theta)-R_Q(\theta_0)$.
\end{assumption}

For each region of interest $C_k$, let $P_k=P(\Z|C_k)$ and $\tilde{P}_k = P(\tilde \Z|C_k)$. The Wasserstein-1 distance between measures $P_k$ and $\tilde{P}_k$ is given by
$
W(\tilde{P}_k,P_k) = \sup_{\|f\|_{\mathrm{Lip}}\le1} \Big|\int f(\z)\,(\mathrm{d}\tilde{P}_k(\z)-\mathrm{d}P_k(\z))\Big|$, 
$k=1,\cdots,K$. For $f : \mathcal{Z} \to \mathbb{R}$, we define the expression $\|f\|_{\mathrm{Lip}}$ by 
$
\|f\|_{\mathrm{Lip}} = \sup \left\{ \frac{|f(\z) - f(\z')|}{d(\z, \z')} :\z, \z' \in \mathcal{Z},\, \z \ne \z' \right\}.
$

Assumption~\ref{A-2} is a local Bernstein (or \emph{margin}) condition linking the variance of the loss difference to its mean.  It sharpens the concentration bound  by enabling the use of Bernstein’s inequality instead of a sub‑Gaussian bound, as in \citep{shen1994convergence}, which results in the rate $\varepsilon_{\lambda}$ of Theorem~\ref{thm1} depending on $\rho(\theta,\theta_0)$ rather than merely on the variance.

\begin{assumption}[Error for a conditional generator]
\label{A-3}
For some constant $c_g>0$ and error rate $\tau_{k,n_{g}}>0$,
\[
\mathbb{P}\Big( W(P_k,\tilde{P}_k) \ge \tau_{k,n_{g}} \Big) \le \exp\Big(-c_g\, n_{g}\, \tau_{k,n_{g}}\Big), \quad k=1,\dots,K.
\]
where $n_{g}$ is the generator's training size, $c_g > 0$ is a constant, and $\tau_{k,n_{g}} > 0$ is the error rate.
\end{assumption}

Assumption~\ref{A-3} quantifies the fidelity of the conditional generator.  The exponential tail condition is standard in modern nonparametric density estimation and is satisfied by diffusion models with sufficient training data (see Supplementary 
Appendix~\ref{transfer-diffusion}), where Lemma~\ref{thm_diff_general} and Proposition~\ref{thm_ug_detail} give generation-error rates for general and transfer-learned models, respectively.

We now define key quantities. Let $\lambda=(\bm\alpha,m,r)$.

\begin{itemize}
\item \emph{Estimation error:} Denote by $\delta_{m+n_r}$ the estimation error rate satisfying an inequality of the form with $0<\gamma<1$,
\begin{equation}
\label{entropy}
\int_{\gamma\delta_{m+n_r}^2/8}^{4 c^{1/2}_{v} \delta_{m+n_r}}
H^{1/2}(u, \mathcal{F}) \, du \leq \frac{\gamma^{3/2}}{2^{10}} (m+n_r)^{1/2} 
\delta_{m+n_r}^2, 
\end{equation}
where $H(\cdot, \mathcal{F})$ is the $L_2$-bracketing metric entropy of the class $\mathcal{F} = \{L_{m,n_r}(\theta, \z_{c}) : \theta \in \Theta\}$, measuring the complexity of the function space $\mathcal{F}$; see \cite{tian2024enhancing} for details.

\item \emph{Approximation error:} Define $\rho(\theta_Q, \theta_0)$, where $\theta_Q = \arg\min_{\theta \in \Theta} 
R_Q(\theta)$ is the approximating point in $\Theta$ to $\theta_0$.

\item \emph{Domain adaptation index:} Define $D_{\lambda} = \sum_{k=1}^{K} \Big|\tilde{\alpha}_k-q_k\Big|$,
where $q_k=Q(\Z\in C_k)$, $\tilde{\alpha}_k = \frac{\alpha_k m + p_k n_{r}}{m+n_{r}}$, and $p_k=\frac{n_k}{n}$.
Note that $\tilde{\alpha}_k$ represents the proportion of the augmented (original and synthetic) samples allocated
to $C_k$. In the absence of domain adaptation (e.g., when $P = Q$), we have $D_{\lambda} = 0$ or equivalently $\tilde{\alpha}_k = q_k$; $k = 1, \dots, K$.

\item \emph{Generation error index:} Define $G_{\lambda} = \frac{m}{m+n_{r}} \sum_{k=1}^{K} \alpha_k\, \tau_{k,n_{g}}$, 
which represents a weighted aggregation of generation errors across the regions $C_k$'s.
\end{itemize}

Theorem \ref{thm1} below establishes a probabilistic error bound on the model performance 
for multimodal data, as defined by \eqref{CoDSA1}, using CoDSA with any generator, including diffusion models.

\begin{theorem}[CoDSA's accuracy]
\label{thm1}
Under Assumptions~\ref{A-0}--\ref{A-3}, for any
\begin{equation}
\label{rate}
\varepsilon_{\lambda} \ge \max\Big\{\rho(\theta_Q,\theta_0),\, \delta_{m+n_{r}},\, D_{\lambda},\, G_{\lambda}\Big\},
\end{equation}
we have 
\[
\mathbb{P}\Big(\rho(\hat{\theta}_{\lambda},\theta_0) \ge c_e\, \varepsilon_{\lambda}\Big)
\le 4\exp\Big(-\frac{1-\gamma}{c_k}\,(m+n_{r})\varepsilon_{\lambda}\Big)
+\sum_{k=1}^K \exp\Big(-c_g\, n_{g}\, \tau_{k,n_{g}}\Big),
\]
where $\mathbb{P}$ denotes the probability, $\gamma\in\Big(\frac{U}{4c_v},1\Big)$ is a constant, and
$c_e = 2 + 2U + 2\lambda$ and $c_d = 16\Big(c_v+c_v c_e + 4U^2 + 4U\beta\Big) + \frac{2U}{3}$. 
\end{theorem}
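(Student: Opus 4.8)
The plan is to bound the excess risk $\rho(\hat\theta_\lambda,\theta_0)=R_Q(\hat\theta_\lambda)-R_Q(\theta_0)$ by splitting it into an approximation part and an estimation part,
\[
\rho(\hat\theta_\lambda,\theta_0)=\underbrace{\big(R_Q(\hat\theta_\lambda)-R_Q(\theta_Q)\big)}_{\text{estimation}}+\underbrace{\rho(\theta_Q,\theta_0)}_{\text{approximation}},
\]
and to control the estimation part through an empirical-process argument on the augmented loss $L_{m,n_r}$. The approximation term is already dominated by $\varepsilon_\lambda$ through \eqref{rate}, so the work concentrates on the first summand. Since $\hat\theta_\lambda$ minimizes $L_{m,n_r}$, the basic inequality $L_{m,n_r}(\hat\theta_\lambda)\le L_{m,n_r}(\theta_Q)$ drives the whole argument: I would convert it into a bound on $R_Q(\hat\theta_\lambda)-R_Q(\theta_Q)$ by separately accounting for (i) the systematic gap between the expected augmented loss and the target risk $R_Q$, and (ii) the stochastic fluctuation of $L_{m,n_r}$ around its mean.

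First I would handle the systematic bias. Writing $\bar L(\theta)=\E[L_{m,n_r}(\theta)]$ for the expected augmented loss, I decompose $\bar L(\theta)-R_Q(\theta)$ into a generation term $\sum_k \tfrac{\alpha_k m}{m+n_r}\big(\E_{\tilde P_k}\ell(\theta)-\E_{P_k}\ell(\theta)\big)$ and a reweighting term $\sum_k(\tilde\alpha_k-q_k)\,\E_{P_k}\ell(\theta)$, using Assumption~\ref{A-0} to replace $Q(\cdot\mid C_k)$ by $P_k$ inside each region. For the centered functions $\ell(\theta,\cdot)-\ell(\theta_Q,\cdot)$, the Lipschitz bound in Assumption~\ref{A-1} gives $\|\ell(\theta,\cdot)-\ell(\theta_Q,\cdot)\|_{\mathrm{Lip}}\le 2\beta$, so Kantorovich--Rubinstein duality turns each generation discrepancy into $2\beta\,W(\tilde P_k,P_k)$; on the event $\{W(\tilde P_k,P_k)\le\tau_{k,n_{g}}\ \forall k\}$, whose complement costs $\sum_k\exp(-c_g n_{g}\tau_{k,n_{g}})$ by Assumption~\ref{A-3}, this aggregates to $2\beta\,G_\lambda$. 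The reweighting term is bounded by $2U\sum_k|\tilde\alpha_k-q_k|=2U D_\lambda$ through the uniform bound $|\ell|\le U$. Both contributions are therefore $O(\varepsilon_\lambda)$ by \eqref{rate}.

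The main obstacle is the stochastic part. Conditioning on the original data fixes the synthetic laws $\tilde P_k$, so that $L_{m,n_r}$ is an average of $m+n_r$ independent (but not identically distributed) bounded terms. I would run the peeling and chaining argument of \citet{shen1994convergence}: partition $\Theta$ into shells $\{2^{s-1}\varepsilon_\lambda\le\rho(\theta,\theta_0)<2^s\varepsilon_\lambda\}$ and, on each shell, bound the normalized fluctuation of $L_{m,n_r}(\theta)-L_{m,n_r}(\theta_Q)$ around $\bar L(\theta)-\bar L(\theta_Q)$ using the $L_2$-bracketing entropy integral \eqref{entropy} that defines $\delta_{m+n_r}$. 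The delicate point is the variance input to Bernstein's inequality: Assumption~\ref{A-2} controls $\Var_Q(\ell(\theta)-\ell(\theta_0))$ by $c_v\rho(\theta,\theta_0)$ under $Q$, but the fluctuation lives under the mixture measure, so I must transfer this variance bound to each $P_k$ and $\tilde P_k$, absorbing the synthetic-law discrepancy through the same Lipschitz and Wasserstein coupling (which is where the $4U\beta$ and $4U^2$ contributions to the constant $c_d$ originate). This yields, for each shell, a Bernstein tail of order $\exp(-c\,(m+n_r)\,2^s\varepsilon_\lambda)$, and summing the geometric series over $s$ produces the single exponential $4\exp(-\tfrac{1-\gamma}{c_d}(m+n_r)\varepsilon_\lambda)$.

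Finally I would assemble the pieces. On the intersection of the good-generator event and the good-empirical-process event, the basic inequality together with the bias bounds $2\beta G_\lambda$ and $2U D_\lambda$ and the entropy-controlled fluctuation $\delta_{m+n_r}$ forces $R_Q(\hat\theta_\lambda)-R_Q(\theta_Q)\le(c_e-1)\varepsilon_\lambda$; adding back the approximation error $\rho(\theta_Q,\theta_0)\le\varepsilon_\lambda$ gives $\rho(\hat\theta_\lambda,\theta_0)\le c_e\varepsilon_\lambda$, with $c_e$ collecting the constants $2$, $2U$, and $2\beta$ from the three error sources. A union bound over the two failure events produces the stated probability, and the constraint $\gamma\in(U/(4c_v),1)$ arises precisely from requiring the Bernstein variance term to dominate the boundedness term so that the shell tails contract geometrically.
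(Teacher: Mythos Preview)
Your proposal is correct and follows essentially the same route as the paper: bias control via Kantorovich--Rubinstein duality and the uniform bound, peeling plus Bernstein with the bracketing-entropy integral of \eqref{entropy}, variance transfer from $Q$ to the augmented mixture, and a union bound with the good-generator event from Assumption~\ref{A-3}. The only cosmetic difference is that the paper introduces the intermediate minimizer $\theta_C=\arg\min_{\theta\in\Theta} R_C(\theta)$ of the expected augmented risk $R_C=\E L_{m,n_r}$ and peels on shells of $\rho_C(\theta,\theta_C)$ rather than on $\rho(\theta,\theta_0)$, so that the signal from the basic inequality appears directly as $\rho_C$ without the extra conversion step through the bias bound that your pivot $\theta_Q$ requires.
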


\begin{remark}
Theorem~\ref{thm1} establishes a probabilistic error bound for the CoDSA method. This result holds for both structured and unstructured $\Z$ in non-Euclidean spaces, provided that the associated generation error is appropriately controlled. Under the conditions $n_g \tau_{k,n_g} \to \infty$ and $(m + n_r) \varepsilon_\lambda \to \infty$ as $m, n \to \infty$, the excess risk 
$\rho(\hat{\theta}_\lambda, \theta_0)$ satisfies $\rho(\hat{\theta}_\lambda, \theta_0) = O_p(\varepsilon_\lambda)$, where $O_p(\cdot)$ denotes stochastic boundedness.
\end{remark}

Theorem \ref{thm1} indicates that CoDSA’s error rate, denoted by \(\varepsilon_{\lambda}\), is influenced by four key components: the approximation error \(\rho(\theta_Q, \theta_0)\), the estimation error \(\delta_{m+n_r}\) reflecting the complexity of the function class, the domain adaptation index \(D_{\lambda}\), and the generation error index \(G_{\lambda}\). We will discuss how these terms are impacted by the hyperparameter $\lambda$ in the next subsection.

\textbf{CoDSA versus the traditional method without synthetic data.}
In the traditional approach, optimization problem \eqref{opt} is solved by setting the tuning parameter to $\lambda = (\alpha, m, r) = (\tilde{\alpha}, 0, 0)$ with $\tilde{\alpha} = (p_1, \cdots, p_K)$. From this viewpoint, we may interpret the traditional method as a special case of 
CoDSA under a specific choice of tuning parameter. Consequently, when \(\lambda\) is adaptively optimized, CoDSA performs at least as well as the traditional method. Crucially, 
when the generation errors $(\tau_{k,n_{g}})_{k=1}^K$ are small, the excess error $\varepsilon_{\lambda}$ can converge at a faster rate with a large value of $m$ and a small $n_{r}$, 
yielding $\delta_{m+n_{r}} \ll \delta_n$ and and thereby improving performance through data augmentation. On the other hand, the generation errors are large,  choosing a smaller $m$ and a larger $n_{r}$ may be beneficial since the weighted aggregation of generation error  $G_{\lambda}$ increases in $m$ and decreases in $n_{r}$. We may improve by refining the allocation vector $\bm{\alpha}$ to reduce $G_{\lambda}$. See Section   \ref{t-example} for an illustrative example of imbalanced classification.

\textbf{CoDSA versus its nonadaptive counterpart.} The key difference between CoDSA and its nonadaptive variant \cite{shen2023boosting} lies in the data-dependent selection of the weight vector \(\bm{\alpha}\) and the use of adaptive sample splitting. 
The nonadaptive method corresponds to a fixed configuration \((\tilde{\alpha}, m, r)\), representing a restricted case of the broader CoDSA framework. Although the nonadaptive method may offer intermediate performance between the traditional approach and CoDSA, the adaptive weight selection in CoDSA plays a crucial role in minimizing the overall error. This adaptivity enables CoDSA to outperform its nonadaptive counterpart in practice.

\subsection{Performance of CoDSA in synthetic sample allocation}

To provide an insight into the performance of CoDSA’s estimator $\hat{\theta}_{\lambda}$ as a function of 
\(\lambda = (\bm{\alpha}, m, r)\), we investigate the ideal choice of $\lambda$. 

As suggested by \eqref{rate}, there appears a two-level trade-off:
\begin{itemize}
\item \textbf{First trade-off of $\Theta$:} There is a balance between the approximation error \(\rho(\theta_Q,\theta_0)\) and the estimation error, analogous to the classical bias-variance trade-off in supervised learning. The optimal trade-off is realized when these two error components are roughly equal, i.e., \(\rho(\theta_Q,\theta_0) = \delta_{m+n_{r}}\). This equality defines the best attainable rate \(\xi_{m+n_{r}} = \max\Bigl(\rho(\theta_Q,\theta_0),\ \delta_{m+n_{r}}\Bigr)\).
\item \textbf{Second trade-off of $\lambda$:} Given the first trade-off, another trade-off emerges among the estimation error $\xi_{m+n_{r}}$, the domain adaptation index \(D_{\lambda}\), and the generation error index \(G_{\lambda}\). This subsequent trade-off can be optimized in \(\lambda\), as outlined in \eqref{opt}.
\end{itemize}

Therefore, we seek the optimal \(\lambda\) in \eqref{rate}:
\begin{align}
\label{opt}
\min_{\lambda} \varepsilon_{\lambda} =
\min_{\lambda} \max\Bigl(\xi_{m+n_{r}},\, D_{\lambda},\, G_{\lambda}\Bigr),
\end{align}
where \(\xi_{m+n_{r}}\) (which converges to 0 as \(m+n_{r} \rightarrow \infty\)) is independent of \(\bm \alpha\) and thus $D_{\lambda}$. Ideally, one may directly optimize \eqref{opt} over \(\lambda\) to obtain the optimal tuning parameter.

However, it is possible to simplify the optimization process by considering a suboptimal solution. To optimize \eqref{opt}, note that \(D_{\lambda} \geq 0\). If we choose 
\begin{equation}
\label{eq-alpha}
\alpha^o_k = q_k + \frac{n}{m}(1-r)(q_k-p_k), \quad k=1,\cdots,K,
\end{equation}
then \(D_{(\bm \alpha^o, m, r)} = 0\) for any \(m\) and \(r\) provided that $
m \geq \max_k \frac{n(1-r)(q_k-p_k)}{I(q_k-p_k\geq 0)-q_k}$,
ensuring that \(\alpha^o_k \in [0,1]\). Consequently, one obtains
\begin{align}
\label{opt2}
\min_{\lambda} \varepsilon_{\lambda} \leq
\min_{(m,r)} \varepsilon_{(\bm \alpha^o, m, r)} =
\min_{(m,r)} \max\Bigl(\xi_{m+n_{r}},\, G_{(\bm \alpha^o, m, r)}\Bigr).
\end{align}

From \eqref{opt2}, we observe a trade-off between \(r\) and \(m\). The generation error \(\tau_{k, n_g}\) decreases as \(r\) increases, while the coefficient \(\frac{m}{m + n - nr}\) increases with \(r\), suggesting the existence of an optimal trade-off point \(r^o\). For $m$, the first term \(\xi_{m + n_r}\) typically decreases with \(m\), whereas \(G_{\lambda}\) increases with \(m\), indicating the presence of an optimal trade-off point \(m^o\).



\subsection{Theoretical example: Imbalanced classification}
\label{t-example}

This subsection demonstrates how one applies the theoretical framework in Theorem~\ref{thm1}  to assess the performance of the CoDSA method in the context of imbalanced binary classification. Specifically, we leverage a latent diffusion model for data augmentation to generate synthetic samples, thereby enhancing the representation of the minority class.

Consider a classification setting characterized by predictor variables $\X \in \mathcal{X} \subset \mathbb{R}^{d_x}$ (e.g., images or textual data) and binary labels $Y \in \{0,1\}$, with $Y=0$ indicating the minority class. Our objective is to estimate the conditional probability:
$\theta(\x) = P(Y = 1 \mid \X = \x)$, $\x \in \mathcal{X}$, from a chosen hypothesis space $\Theta$ using logistic loss:
$\ell(\theta(\x), y) = -y \log \theta(\x) - (1 - y) \log(1 - \theta(\x))$.
To address the class imbalance during evaluation, we adopt a balanced distribution over the labels such that $Q(Y=0) = Q(Y=1) = 1/2$. This adjustment mitigates bias toward the majority class, ensuring a fair assessment of model performance.

\paragraph{Latent diffusion models.}
The synthetic data generation involves a latent diffusion model composed of two essential components:

\begin{enumerate}
\item A pretrained \textbf{encoder-decoder pair $(f,g)$}:
\begin{itemize}
\item For image data, $f$ could be a the Variational Autoencoder (VAE) encoder \citep{kingma2013auto} that maps images to a lower-dimensional latent representation $\bm{u}$, and $g$ is a VAE decoder reconstructing images from latent variables. For example, 
in our age prediction example in Section \ref{age-prediction}, we utilize a stable diffusion model, which is itself
a latent diffusion model with both an encoder and a decoder.

\item For text data, $f$ is a transformer-based encoder (e.g., BERT) that converts sentences into latent embeddings, while $g$ is a transformer-based decoder that reconstructs sentences from these embeddings. In the sentiment analysis example (Section \ref{sentiment-analysis}), SentenceBERT \citep{reimers2019sentence} acts as the encoder, and a pretrained Vec2Text model \citep{morris2024language} decodes the embeddings to generate text reviews.

\end{itemize}

\item A diffusion model, trained to learn and replicate the distribution of the latent variables. The diffusion process involves adding noise gradually to the latent variables (forward diffusion) and training a neural network to remove this noise (backward diffusion), as defined in \eqref{forward} and \eqref{reverse} of Appendix \ref{transfer-diffusion}.
\end{enumerate}

Specifically, given a random sample $(\X_i, Y_i)_{i=1}^{n_{g}}$, the encoder produces latent representations $(\U_i)_{i=1}^{n_{g}}$, with $\U = f(\X_i)$. The diffusion model then learns conditional densities $p_{\bm{u}|C_k}$ for each class $C_k=\{Y=k\}$ in the latent space. Synthetic latent samples $(\tilde{\U}_i)_{i=1}^{m}$ generated by this process are subsequently transformed back into synthetic data $(\tilde{\X}_i)_{i=1}^{m}$ through the decoder: $\tilde{\X}_i=g(\tilde{\U}_i)$. Figure \ref{fig:da} presents the analysis procedure.

\begin{figure}[ht]
\centering
\includegraphics[width=0.45\textwidth]{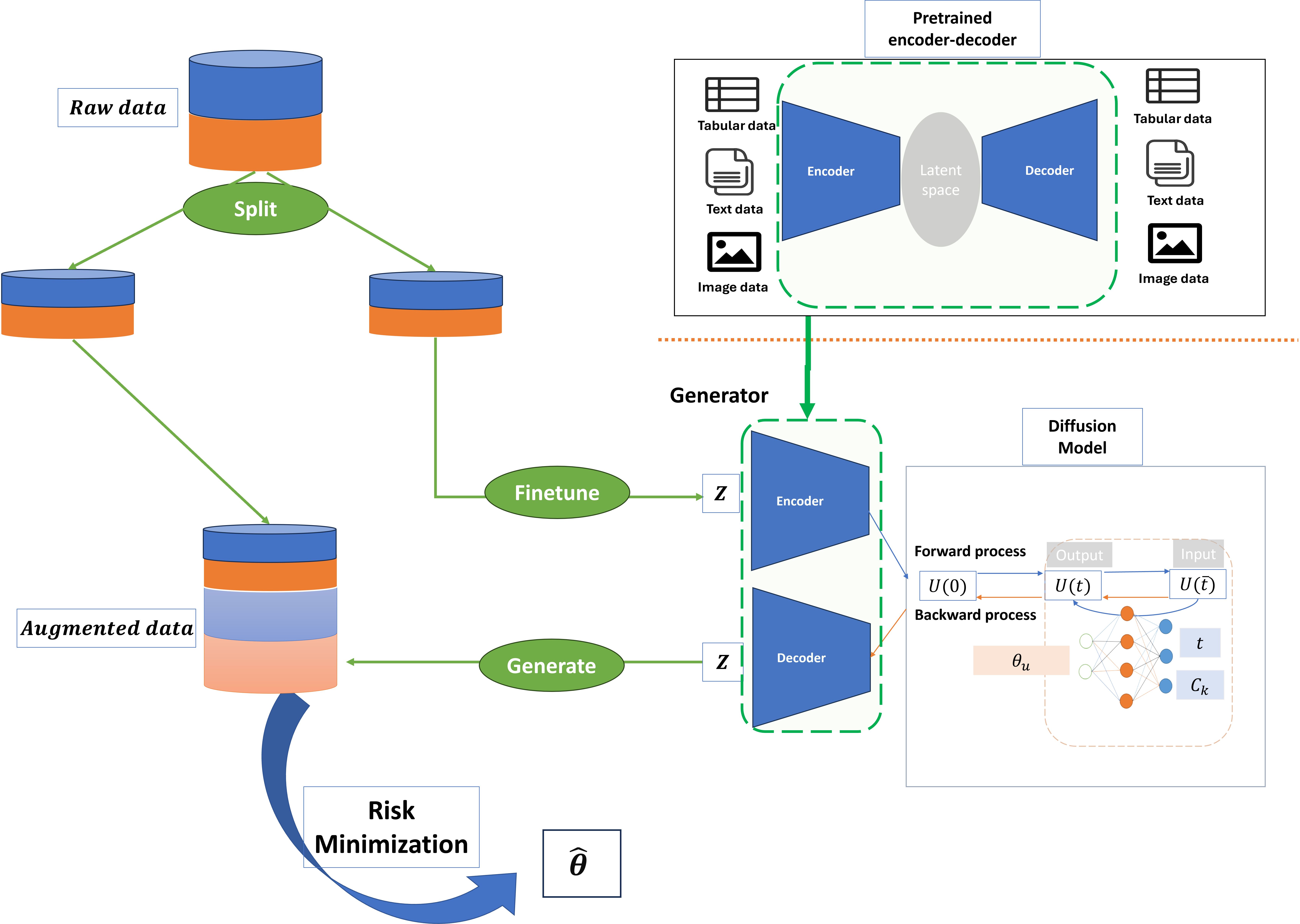} 
\caption{Flowchart for CoDSA via a latent diffusion model with an encoder-decoder architecture.}
\label{fig:da}
\end{figure}

Under appropriate smoothness conditions on the encoder-decoder architecture and the latent distribution, explicit bounds can be derived on the accuracy of synthetic data generation in the original multimodal space, measured by the Wasserstein distance $W(P_k,\tilde{P}_k)$ 
can be established. 

First, we define the smooth class before presenting the required conditions.

\paragraph{Smoothness.}
Consider a multi-index $\bm{\alpha}$ such that $|\bm{\alpha}| \leq \lfloor r \rfloor$, where $\lfloor r \rfloor$ denotes the integer part of smoothness $r>0$. A H\"older ball $\mathcal{C}^{r}(\mathcal{D}, \mathbb{R}^m, B)$ of radius $B$ with smoothness degree $r$ is defined as:
{
\small
\[
\left\{(g_1,\dots,g_m): \max_{1 \leq l \leq m}\left(\max_{|\bm{\alpha}| \leq \lfloor r \rfloor}\sup_{\x} |\partial^{\bm{\alpha}} g_l(\x)| + \max_{|\bm{\alpha}|=\lfloor r \rfloor}\sup_{\x \neq \y}\frac{|\partial^{\bm{\alpha}} g_l(\x) - \partial^{\bm{\alpha}} g_l(\y)|}{\|\x-\y\|^{r-\lfloor r \rfloor}}\right) < B\right\}.
\]
}

\begin{assumption}[Latent density]\label{A_ex}
Assume the pretrained encoder-decoder $(f,g)$ satisfies:
\begin{enumerate}
\item Reconstruction error bound:
$\E\|g(f(\X)) - \X\| \leq \varepsilon_s$,
with a small error rate $\varepsilon_s$ due to extensive pretraining.

\item Decoder boundedness: $\sup_{\bm{u}} \|g(\bm{u})\| \leq B_x$.

\item Conditional latent density regularity:
$p_{\bm{u}|y=k}(\bm{u}|y=k) = \exp\left(-\frac{c_k \|\bm{u}\|^2}{2}\right)f_k(\bm{u})$,
where $f_k \in \mathcal{C}^{r_u}(\mathbb{R}^{d_u}, \mathbb{R}, B_k)$ bounded below by a positive constant; $k=0,1$.
\end{enumerate}
\end{assumption}

Assume that the true conditional probability function $\theta_0$ is sufficiently 
smooth:
\begin{assumption}[Conditional probability]\label{A_f}
The true conditional probability $\theta_0$ belongs to a H\"older class:
$\theta_0 \in \mathcal{C}^{r_{\theta}}([-B_x,B_x]^{d_x}, [0,1], B)$,
and is uniformly bounded away from extremes:
$0 < \underline{c} \leq \theta_0(\x) \leq \overline{c} < 1$.
\end{assumption}

\paragraph{Neural networks.}
The conditional densities and probability functions are estimated by neural networks (NNs) constrained by architectural parameters (depth, width, sparsity) as specified by the parameter space \eqref{p-space} in Appendix \ref{transfer-diffusion}. ReLU activations are typically employed.

In the latent diffusion model, the ReLU neural networks estimate the score function in the diffusion process, which is described in detail in Appendix \ref{transfer-diffusion}. The network parameters are well set in Lemma \ref{thm_diff_general} of Appendix \ref{transfer-diffusion}.
In the classification model, we defined $\Theta$ as the ReLU neural network class with specific model settings defined in Lemma \ref{l_classification} of Appendix \ref{sec:proof}.

 Under these assumptions, we establish Theorem \ref{thm2}.

\begin{theorem}\label{thm2}
Under Assumptions \ref{A-0}--\ref{A_f}, the estimation error of CoDSA is upper bounded with $\varepsilon_{\bm{\alpha},m,r}=O\left(\max(\xi_{m+n_r}, G_{\lambda},  D_{\lambda})\right)$, where $\xi_{m+n_r} = O\left((m+n_{r})^{-\frac{r_{\theta}}{d_x+2r_{\theta}}}\log^{\beta_{\theta}}(m+n_{r})\right)$, $
    G_{\lambda}= O\left(\varepsilon_s + n_{g}^{-\frac{r_u}{d_u+2r_u}}\log^{\beta_u}(n_{g})\right)$,
    $D_{\lambda}=O\left(\big|\tilde{\alpha}_1 - \frac{1}{2}\big|\right)$,
and $\beta_u$ and $\beta_{\theta}$ are constants depending on the smoothness of latent and conditional probability functions, respectively.
\end{theorem}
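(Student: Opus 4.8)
The plan is to instantiate the general oracle bound of Theorem~\ref{thm1} in the imbalanced-classification setting and then replace each of its four abstract ingredients by an explicit rate. By Theorem~\ref{thm1}, it suffices to bound $\varepsilon_\lambda = \max\{\rho(\theta_Q,\theta_0),\,\delta_{m+n_r},\,D_\lambda,\,G_\lambda\}$; grouping the first two terms into $\xi_{m+n_r}=\max\{\rho(\theta_Q,\theta_0),\delta_{m+n_r}\}$ reduces the claim to three separate rate computations, one for $\xi_{m+n_r}$, one for $G_\lambda$, and one for $D_\lambda$. The first two are genuine nonparametric estimation problems (for the classifier and for the generator, respectively), while the third is an elementary algebraic identity. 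Verifying the hypotheses of Theorem~\ref{thm1} amounts to checking Assumption~\ref{A-3} with the $\tau_{k,n_g}$ produced below, the remaining Assumptions~\ref{A-0}--\ref{A-2} being built into the partition $\{Y=k\}$ and the logistic loss.

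For $\xi_{m+n_r}$ I would treat the logistic-loss estimation of the $r_\theta$-smooth conditional probability $\theta_0$ (Assumption~\ref{A_f}) by a sparse ReLU network as a standard nonparametric problem. The approximation term $\rho(\theta_Q,\theta_0)$ is controlled by ReLU approximation theory for H\"older balls, while the estimation term $\delta_{m+n_r}$ is obtained by bounding the $L_2$-bracketing entropy $H(\cdot,\mathcal F)$ of the network class and solving the integral inequality \eqref{entropy}. Assumptions~\ref{A-1} and \ref{A-2} are exactly what make the logistic loss amenable to this machinery, with the boundedness away from $0$ and $1$ in Assumption~\ref{A_f} ensuring the loss and its variance are well behaved. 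Balancing depth, width, and sparsity against the sample size $m+n_r$ then yields the minimax rate $(m+n_r)^{-r_\theta/(d_x+2r_\theta)}$ up to the logarithmic factor $\log^{\beta_\theta}(m+n_r)$; this is packaged in Lemma~\ref{l_classification}.

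The generation term $G_\lambda=\frac{m}{m+n_r}\sum_k\alpha_k\tau_{k,n_g}$ is the crux. Since $\frac{m}{m+n_r}\le 1$ and $\sum_k\alpha_k=1$, it is enough to bound each regional rate $\tau_{k,n_g}$, i.e.\ the Wasserstein error $W(P_k,\tilde P_k)$ in the \emph{observation} space. I would decompose this through the decoder: writing $\X\approx g(f(\X))$ and $\tilde\X=g(\tilde\U)$, a triangle inequality for $W$ gives $W(P_k,\tilde P_k)\le \E\|\X-g(f(\X))\| + W\big(\mathrm{law}(g(\U)),\mathrm{law}(g(\tilde\U))\big)$. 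The first summand is at most the reconstruction error $\varepsilon_s$ by Assumption~\ref{A_ex}(1) and the coupling characterization of $W$; the second is transferred to the latent space using the regularity of $g$ and then controlled by the latent-diffusion generation rate of Lemma~\ref{thm_diff_general} (or Proposition~\ref{thm_ug_detail} in the transfer-learned case). Because the conditional latent density is $r_u$-smooth in dimension $d_u$ (Assumption~\ref{A_ex}(3), with decoder boundedness (2) fixing a compact support), that rate is $n_g^{-r_u/(d_u+2r_u)}\log^{\beta_u}(n_g)$, and summing over $k$ gives $G_\lambda=O\big(\varepsilon_s+n_g^{-r_u/(d_u+2r_u)}\log^{\beta_u}(n_g)\big)$.

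Finally, $D_\lambda$ is immediate: with $K=2$ and $q_0=q_1=\tfrac12$, the constraint $\tilde\alpha_0+\tilde\alpha_1=1$ forces $|\tilde\alpha_0-\tfrac12|=|\tilde\alpha_1-\tfrac12|$, so $D_\lambda=2|\tilde\alpha_1-\tfrac12|=O(|\tilde\alpha_1-\tfrac12|)$. Taking the maximum of the three bounds and invoking Theorem~\ref{thm1} completes the argument. I expect the main obstacle to be the latent-to-observation transfer in the generation bound: converting the latent Wasserstein rate into an observation-space rate requires quantitative control of the decoder's modulus of continuity, and care is needed so that the reconstruction error $\varepsilon_s$ and the diffusion error combine additively rather than degrading the exponent. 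This is precisely where Assumption~\ref{A_ex} and the diffusion result of Lemma~\ref{thm_diff_general} must be stitched together, and where the conditional (per-region) version of the reconstruction bound has to be handled rather than its marginal form.
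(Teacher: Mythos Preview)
Your proposal is correct and follows essentially the same route as the paper: the paper's proof of Theorem~\ref{thm2} is literally a one-line combination of Theorem~\ref{thm1}, Lemma~\ref{l_classification} (for $\xi_{m+n_r}$), and Lemma~\ref{thm_diff_general}/Proposition~\ref{thm_ug_detail} (for $G_\lambda$), with $D_\lambda$ read off directly from $K=2$, $q_k=\tfrac12$. Your triangle-inequality decomposition of $W(P_k,\tilde P_k)$ into reconstruction error plus latent-diffusion error is exactly the content of Proposition~\ref{thm_ug_detail}; the only refinement to note is that the paper passes from observation-space Wasserstein to latent-space \emph{total variation} via the bounded-support inequality $W\le B_x\cdot\mathrm{TV}$ and the data-processing inequality for $\mathrm{TV}$ under $g$, rather than a Lipschitz pushforward of $W$ (Assumption~\ref{A_ex} gives boundedness of $g$, not a Lipschitz constant).
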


The accuracy of the CoDSA method is characterized by the interplay among the generator training sample size $n_{g}$, the augmented sample size $(m+n_{r})$, and intrinsic smoothness parameters $(r_u, r_{\theta})$. Under suitable smoothness conditions on the encoder-decoder architecture and latent distribution, the estimation error $\varepsilon_s$ admits explicit convergence rates.
Specifically, assuming the encoder-decoder transformation is continuously differentiable and satisfies perfect compressibility as defined in \cite{jiao2024latent}, the estimation error $\varepsilon_s$, based on a pretraining dataset of size $N$, achieves the convergence rate:
$\varepsilon_s = O\left(N^{-\frac{1}{d_x + 2}}\right)$,
which becomes arbitrarily small as $N \to \infty$.

Furthermore, by selecting parameters $\tilde{\alpha}_1 = \frac{1}{2}$, $r=1$, and ensuring the augmented sample size $m$ is sufficiently large, the overall CoDSA error $\varepsilon_{\bm{\alpha},m,r}$ simplifies, up to higher-order terms, to:
$\varepsilon_{\bm{\alpha},m,r} \approx n^{-\frac{r_u}{d_u+2r_u}}\log^{\beta_u}(n) + m^{-\frac{r_{\theta}}{d_x+2r_{\theta}}}\log^{\beta_{\theta}}(m)$.
Moreover, the dimensionality reduction induced by the encoder-decoder typically leads to the condition:
$\frac{r_u}{d_u + 2r_u} > \frac{r_{\theta}}{d_x + 2r_{\theta}}$.
Consequently, by choosing sufficiently large augmentation size $m$, the error $\varepsilon_{\bm{\alpha},m,r}$ is dominated by the term:
$n^{-\frac{r_u}{d_u+2r_u}}\log^{\beta_u}(n)$,
which is strictly lower than the error rate of the traditional method without data augmentation, given by
$n^{-\frac{r_{\theta}}{d_x+2r_{\theta}}}\log^{\beta_{\theta}}n$.

\section{Numerical examples}
\label{sec:numerical}

In this section, we present numerical results on tabular, text, and image data to demonstrate the effectiveness of our CoDSA method. Detailed experimental settings are provided in Appendix~\ref{exp-details}. The implementation for the experiments presented in this section is available at: \url{https://github.com/shakayoyo/CoDSA}.

\subsection{Simulations: Imbalanced classification }
\label{im-classification}

This simulation study investigates the effectiveness of CoDSA in improving performance on imbalanced classification. In binary classification, we generate data by first sampling three latent variables $u_1-u_3$ from a mixture of normal distributions: $u_1 \sim 0.5\,N(-2,1) + 0.5\,N(2,1)$,
$u_2 \sim 0.5\,\text{Uniform}(0,1) + 0.5\,\text{Uniform}(2,3)$, and  $u_3 \sim \text{Exponential}(1) - 1$.
Using these latent variables, we construct ten nonlinear features:
\begin{align*}
\left\{
\begin{aligned}
x_1 &= u_1 u_2,      &\quad x_2 &= u_1 u_3,      &\quad x_3 &= u_2 u_3,  &\quad x_4 &= u_1^2,     &\quad x_5 &= u_2^2,  \\x_6 &= u_3^2, &\quad x_7 &= u_1 u_2 u_3,  &\quad x_8 &= u_1^3,        &\quad x_9 &= u_2^3,  &\quad x_{10} &= u_3^3
\end{aligned}
\right.
\end{align*}
yielding a 10-dimensional feature vector \(\x = (x_1, \dots, x_{10})\).

Next, to introduce a decision function, consider  
\[
s(\x) = \left( \sin\Bigl(2\pi\, \frac{\x_{1:5}^T \bm{w}}{1 + \left|\x_{1:5}^T \bm{w}\right|}\Bigr) \right)^2 - \left( \cos\Bigl(3\pi\, \frac{\x_{6:10}^T \bm{w}}{1 + \left|\x_{6:10}^T \bm{w}\right|}\Bigr) \right)^2,
\]
where the 5-dimensional weight vector \(\bm{w}\) is linearly spaced between $-1$ and $1$, 
\(\x_{1:5}\) denotes the subvector containing the first five components of \(\x\), and $^T$ denotes the transport.

 Let thresholds based on the quantiles of \(s(\x)\) be 
$\tau_{\text{lo}} = q\Bigl(\frac{n_1}{2(n_1+n_2)}\Bigr)$ and $\tau_{\text{hi}} = q\Bigl(1 - \frac{n_1}{2(n_1+n_2)}\Bigr)$,
with \(q(\cdot)\) the quantile function of \(s(\x)\). Now define a central threshold and a half-range as follows:
$\tau = \frac{\tau_{\text{lo}} + \tau_{\text{hi}}}{2}$ and $\delta = \frac{\tau_{\text{hi}} - \tau_{\text{lo}}}{2}$.
Then, the decision function that outputs a binary label is given by
$Y = \frac{1}{2}\left( \operatorname{sign}\Bigl(\delta - \bigl| s(\x) - \tau \bigr|\Bigr) + 1 \right)$.
Here, \(Y=1\) if \(|s(\x)-\tau| < \delta\) and \(Y=0\) otherwise, and $\operatorname{sign}$ denotes the sign function.

Applying this sampling scheme, we obtain a sample of size \( n = n_1 + n_2 = 5200 \), with
\( n_1 = 1400 \) and \( n_2 = 3800 \) samples belonging to class \( C_1 = \{Y = 0\} \) and 
class \( C_2 = \{Y = 1\} \), respectively. For evaluation, we adopt a balanced metric with equal class weights, \( Q(Y=0) = Q(Y=1) = \frac{1}{2} \), in contrast to the empirical class proportions \( P(Y=0) = \frac{n_1}{n} \) and \( P(Y=1) = \frac{n_2}{n} \). Accordingly, we construct balanced validation and test sets, consisting of 400 and 800 samples, respectively, with an equal number of observations from each class, while the remaining data is used for training. The validation set supports hyperparameter tuning and early stopping of the neural network, while the test set is for final performance evaluation. As a result, the evaluation distribution \( Q \) differs from the training distribution \( P \).

For classification, we adopt a three-layer feed-forward neural network with ReLU activations in the first two layers and 128 hidden units as our baseline. We benchmark two oversampling methods for imbalanced classification within the non-transfer and transfer variants of CoDSA: SMOTE (Synthetic Minority Over-sampling Technique) \citep{chawla2002smote} and ADASYN (Adaptive Synthetic Sampling) \citep{he2008adasyn}. In the non-transfer setting, we train a latent diffusion model on a held-out subset of the training data, and combine the generated synthetic samples with the remaining data before training the classifier. In the transfer setting, the latent diffusion model is guided by a pretrained encoder–decoder trained on an independent dataset of 10,000 examples that exhibit the same structural transformations, enabling cross-domain knowledge transfer.

We use cross-entropy, defined by the logistic loss $\ell$ for classification, as the evaluation metric on the test set. For a data point $\z = (y, \x)$, where $\x$ is the predictor and $y \in \{0,1\}$ is the label, the loss in \eqref{CoDSA1} is defined as:
$
\ell(\theta, z) = -\left[ y \cdot \log P(Y = 1 \mid \x) + (1 - y) \cdot \log (1 - P(Y = 1 \mid \x)) \right]
$,
where $P(Y = 1 \mid \x)$ is the model's predicted probability that the label is 1 given the input $\x$.

For CoDSA, optimal hyperparameters $\lambda = (\bm{\alpha}, m, r)$ are chosen via grid search to minimize validation cross-entropy.
See Appendix \ref{exp-details} for more details.

As shown in Table \ref{tab:class_results}, the baseline neural network attains the lowest cross-entropy on the majority class (0.1920 $\pm$ 0.0199). Still, it performs very poorly on the minority class (0.9594 $\pm$ 0.0688), giving the highest overall cross-entropy (0.5757 $\pm$ 0.0264). This pattern highlights the baseline model’s bias toward the majority class under severe class imbalance.

All four methods, SMOTE, ADASYN, non-transfer CoDSA, and transfer CoDSA, substantially reduce cross-entropy for both classes, confirming that synthetic augmentation effectively mitigates class imbalance. Among the traditional oversamplers, SMOTE (0.4739 $\pm$ 0.0365 overall) edges out ADASYN (0.4764 $\pm$ 0.0343), indicating that uniform interpolation slightly outperforms adaptive interpolation in this setting. Both, however, are surpassed by the CoDSA variants: non-transfer CoDSA reduces overall cross-entropy to 0.4641 $\pm$ 0.0167, and transfer CoDSA achieves the best result of 0.4589 $\pm$ 0.0153. The additional gain from the transfer CoDSA shows that leveraging knowledge from related tasks further improves the quality of the generated samples.

In the transfer CoDSA setting, the optimal hyperparameters were found to be
$(m/n = 1.44 \pm 0.59, \alpha_1 = 0.59 \pm 0.12, r = 0.67 \pm 0.24\bigr)$.
The tuning favored a large synthetic sample size, suggesting that the generative model produces high-fidelity data which effectively complements the original training set.  The fact that \(\alpha_1>0.5\) implies deliberate oversampling of the minority class, steering the augmented dataset toward class balance and reinforcing our theoretical result that near-balanced augmentation enhances generalization. Moreover, the observed variability in the optimal hyperparameters further supports the trade-off characterized by Theorem~1 during the tuning process.

\begin{table}[ht]
\centering
\caption{Cross‐entropy values (standard errors in parentheses) on the test set for the baseline, SMOTE, ADASYN, and non‐transfer and transfer CoDSA methods, over 10 resamples of the original data.}
\begin{tabular}{llccc}
\toprule
\textbf{Model}    &        & \textbf{Majority}          & \textbf{Minority}          & \textbf{Overall}             \\[3pt]
\midrule
Baseline &  (Neural Network)             & 0.1920 (0.0199)    & 0.9594 (0.0688)     & 0.5757 (0.0264)     \\[2pt]
SMOTE    &               & 0.4134 (0.0331)    & 0.5345 (0.0607)     & 0.4739 (0.0365)     \\[2pt]
ADASYN   &               & 0.4388 (0.0493)    & 0.5141 (0.0981)     & 0.4764 (0.0343)     \\[2pt]
CoDSA    & Non-Transfer  & 0.4441 (0.0625)    & 0.4841 (0.0523)     & 0.4641 (0.0167)     \\[2pt]
CoDSA    & Transfer      & 0.4224 (0.0505)    & 0.4955 (0.0520)     & 0.4589 (0.0153)    \\[2pt]
\bottomrule
\end{tabular}
\label{tab:class_results}
\end{table}

\begin{figure}[htbp]
\centering
\includegraphics[width=0.75\textwidth]{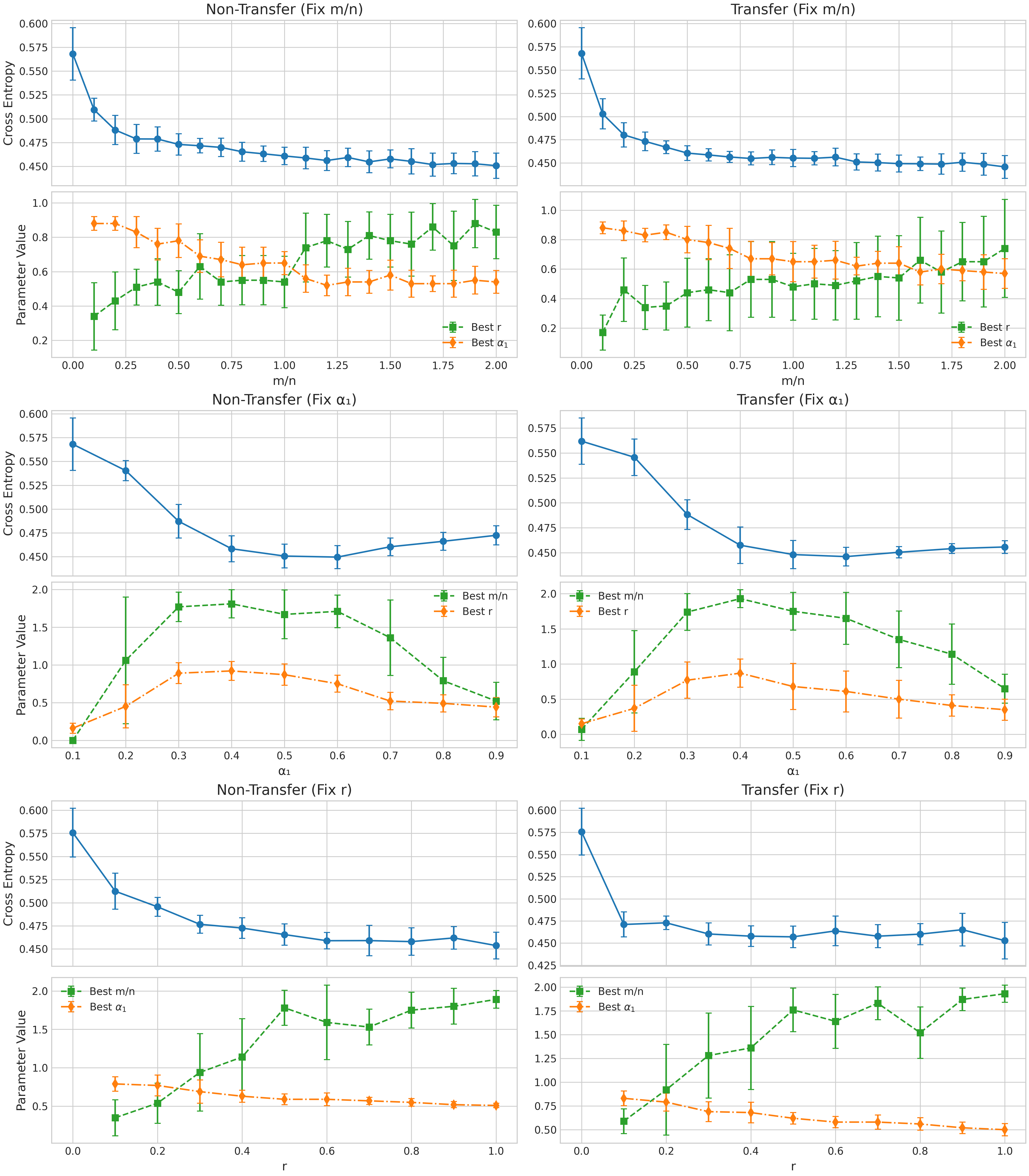} 
\caption{Test cross-entropy of CoDSA (mean$\pm$ one SE) across the three hyperparameters $m/n$, $\alpha_1$, and $r$ under the experimental conditions of Section~\ref{im-classification}. {\bf Left column}: non-transfer CoDSA; {\bf right column}: transfer CoDSA. {\bf Top row}: dependence on the synthetic-to-original sample ratio $m/n$; {\bf middle row}: dependence on the minority-region allocation proportion $\alpha_1$; {\bf bottom row}: dependence on the sample-split ratio $r$. Each point is the average of 10 simulation replicates (error bars denote  $\pm$ one standard error). }
\label{fig:toy}
\end{figure}

To elucidate how performance varies with each hyperparameter and how these hyperparameters interact, we present the marginal relationships in Figure~\ref{fig:toy}. Each point in the figure is obtained by fixing one hyperparameter and selecting the optimal values of the other two on validation error.

As shown in Figure~\ref{fig:toy}, increasing the synthetic-to-original ratio ($m/n$) sharply reduces test cross-entropy for both non-transfer and transfer CoDSA variants, although gains diminish for large values of $m/n$. Varying the allocation weight $\alpha_1$ uncovers a pronounced optimum: the lowest cross-entropy occurs when approximately 60\% of generated samples target the minority region in both settings. Meanwhile, increasing the split ratio $r$ above 0.6 yields superior performance, balancing the needs of generator training and downstream learning. The lower panels of each subplot illustrate hyperparameter interactions: as $m/n$ grows, the optimal split ratio rises while $\alpha_1$ converges toward 0.5, indicating that larger synthetic datasets require more data for high-fidelity generation and favor near-balanced augmentation to maximize performance.

\subsection{Simulations: Regression with undersampled regions}

This simulation study investigates how adaptive data augmentation can address data imbalance in regression settings, particularly in regions of the input space with sparse observations. As in imbalanced classification, where minority classes receive fewer samples, regression problems may exhibit ``undersampled regions'', subpopulations for which data is limited, leading to degraded model performance in those areas.

Consider a regime-switching regression model with region-dependent nonlinearity. Our primary objective is to demonstrate the effectiveness of adaptive data augmentation in improving prediction accuracy for a continuous outcome variable $y$, given a vector of predictors $\x$ with a focus on enhancing performance in the undersampled region.

 Specifically, we define two regions, $C_1$ (undersampling): $u_1 \sim \mathrm{Uniform}(0,0.5)$ and
$C_2$ (oversampling): $u_1 \sim \mathrm{Uniform}(0.5,1)$, corresponding to undersampled and oversampled areas, respectively.
Let $u_2 = u_1^2 + \varepsilon$; $\varepsilon \sim N(0,1)$. Then, we construct five features 
$\x = (x_1, \cdots, x_5)$:
$x_1 = u_1$; $x_2 = I(u_1 > 0.5)u_1 + u_2$; $x_3 = I(u_1 > 0.5)\,u_2 + \log(1 + |u_2|)$, $x_4 = |u_2|$;
$x_5 = u_1 - u_2$, where $I(\cdot)$ is the indicator function. 

Next, we set the regression parameter vector $\bm \beta = (3.0, 2.0, -1.0, 0.5, 1.0)$. Then, we introduce region-dependent nonlinearity into the 
regression function:
\[
f(x) =
\begin{cases}
2(\x \cdot \boldsymbol{\beta})^2 + \x \cdot \boldsymbol{\beta}, & \text{if } \x \in C_1,\\[1mm]
2(\x \cdot \boldsymbol{\beta})^2 - \x \cdot \boldsymbol{\beta}, & \text{if } \x \in C_2.
\end{cases}
\]
Then, the outcome $y$ becomes: 
$y = f(\x) + \varepsilon$, with $\varepsilon \sim N(0,\sigma^2)$,
where the choice of $\sigma=0.2$ controls the noise level in the observations. 

The combination of these two data types results in a random sample of size $n_1 = 1400$ from $C_1$ and a sample of 
size $n_2 = 3800$ from $C_2$, yielding a total of $n_1 + n_2 = 5200$ observations. From this sample, we extract a balanced validation set of 400 samples and a balanced test set of 800 samples while using the remaining observations for training.

To predict the outcome from predictors, we use a RandomForest regressor from \texttt{scikit-learn} \citep{breiman2001random}, employing the squared-error ($L_2$) loss for $\ell$ in~\eqref{CoDSA1} as our baseline model. We compare this baseline and a single oversampling strategy, SMOGN, against two variants of CoDSA. SMOGN \citep{Branco2017SMOGN} rebalances the training set by adding Gaussian-perturbed replicas to sparsely populated regions of the continuous target distribution. Non-transfer CoDSA trains a latent-diffusion model on a held-out subset of the original data and augments the remaining observations with the generated samples before fitting the Random Forest. Transfer CoDSA follows the same procedure but guides the latent-diffusion model with a pretrained encoder–decoder trained on an independent dataset of 10,000 examples that exhibit the same structural transformations, thereby enabling cross-domain knowledge transfer.

For non-transfer and transfer CoDSA models, we tune hyperparameters via grid search on the validation dataset. The tuning hyperparameters include the sample split ratio $r \in [0,1]$, region allocation ratio $\alpha_1 \in [0.1, 0.9]$, and synthetic-to-original sample size ratio $m/n \in [0,2]$. We use the validation set for tuning diffusion model parameters. Optimal hyperparameters are selected based on the root mean squared error (RMSE) on the validation dataset.

For evaluation, we report the RMSE on the test set. As shown in Table~\ref{tab:toy2}, both CoDSA variants outperform the baseline and the SMOGN.
Transfer CoDSA yields the largest gain, reducing the overall RMSE (1.0241~$\rightarrow$~0.5911), an improvement of roughly 42\% relative to the baseline and 30\% relative to SMOGN. Non-transfer CoDSA surpasses SMOGN, trimming the overall RMSE by about 4\% (0.8414~$\rightarrow$~0.8092).

These gains are driven primarily by substantial improvements on the minority class: transfer CoDSA cuts minority-class RMSE by about 60\% versus the baseline and 51\% versus SMOGN (1.5534~$\rightarrow$~0.7620), while non-transfer CoDSA achieves reductions of 43\% and 30\%, respectively (1.5534~$\rightarrow$~1.0915). Although all augmentation methods slightly increase the majority-class error, the net effect remains strongly positive for both CoDSA variants, especially in the transfer setting.

The hyperparameter search reveals that, in the transfer CoDSA setting, the model favors a minority-focused oversampling strategy—allocating over 80\% of synthetic samples to the minority class (\(\alpha_1 = 0.83 \pm 0.09\))—along with a split ratio of \(r = 0.43 \pm 0.21\) and a synthetic-to-original ratio of \(m/n = 1.16 \pm 0.54\). When combined with the pretrained encoder–decoder, this configuration delivers the greatest overall benefit.

\begin{table}[htbp]
\label{tab:toy2}
  \centering
\caption{Average test-set RMSE (standard error) for the Baseline, SMOGN, and CoDSA models across 10 simulation replications (on
a test set of size $800$; hyperparameters were tuned on a separate validation set of size $400$.}
  \label{tab:toy}
  \begin{tabular}{llccc}
    \toprule
    \textbf{Model} & & \textbf{Majority} & \textbf{Minority} & \textbf{Overall}\\[3pt]
    \midrule
    Baseline & (Random Forest) & 0.1392 (0.0224) & 1.9090 (1.2483) & 1.0241 (0.6262)\\
    SMOGN & & 0.2312 (0.1076) & 1.5534 (1.0576) & 0.8414 (0.4715)\\
    CoDSA & Non-Transfer & 0.5269 (0.2305) & 1.0915 (0.7772) & 0.8092 (0.3835)\\
    CoDSA & Transfer & 0.4201 (0.1046) & 0.7620 (0.5011) & 0.5911 (0.2718)\\
    \bottomrule
  \end{tabular}
\end{table}

\subsection{Sentiment analysis}
\label{sentiment-analysis}

This subsection evaluates the predictive performance of CoDSA against a baseline model that does not employ data augmentation or transfer learning to incorporate insights from related domains. The evaluation uses the Yelp Reviews dataset (\url{www.yelp.com/dataset}), a widely recognized benchmark for sentiment analysis and star rating prediction. The dataset comprises over 6.9 million reviews, each rated on a scale of 1 to 5 stars, with 5 indicating the highest level of satisfaction. In addition to review text, it includes business-related attributes such as categories, locations, and user feedback metrics (e.g., useful, funny, and cool votes).

From the first 100,000 short reviews ($<$ 64 words), we study domain adaptation for rating prediction, transferring knowledge from reviews deemed non-useful to those considered useful. Partitioning by helpful-vote count yields a source domain of 71,687 reviews with zero votes and a target domain of 28,313 reviews with at least one.

As shown in Figure~\ref{fig:hist_yelp}, the rating distributions for the source (non-useful reviews) and target (useful reviews) datasets differ significantly, with both exhibiting a strong skew toward higher ratings. However, insights from reviewers who mark reviews as non-useful can still provide valuable context for understanding the perspectives of those who find them useful. Notably, ratings are predominantly clustered around 4 and 5, resulting in an undersampling of ratings between 1 and 3.
For sentiment analysis, we utilize a multi-class logistic regression model with balanced class weights as the classifier, leveraging a TF-IDF feature extractor \citep{zhang2011comparative} to convert text reviews into numerical features. We then apply the CoDSA method for data augmentation. Let $\Z=(Y,\X)$ with $Y$ and $\X$ denoting ratings and text reviews, and let $C_k = {Y=k}$; $k = 1, \dots, 5$. Specifically, we employ a latent diffusion model with an encoder-decoder as our generator. SentenceBERT (Sentence-Bidirectional Encoder Representations from Transformers) \citep{reimers2019sentence} serves as the encoder, mapping sentence embeddings from $\Z$ to $\U$. A pretrained Vec2Text model \citep{morris2024language} then decodes the diffused vector embeddings to generate text reviews.

For comparison, we employ two training strategies: non-transfer CoDSA and transfer CoDSA, both utilizing latent diffusion models. Non-transfer CoDSA trains the latent diffusion model solely on the target dataset. In contrast, transfer CoDSA fine-tunes a diffusion model—originally pretrained on the source (non-useful reviews) data—using the target (useful reviews) dataset.

\begin{figure}[ht]
\centering
\begin{minipage}[t]{0.45\textwidth}
    \centering
    \includegraphics[width=\linewidth]{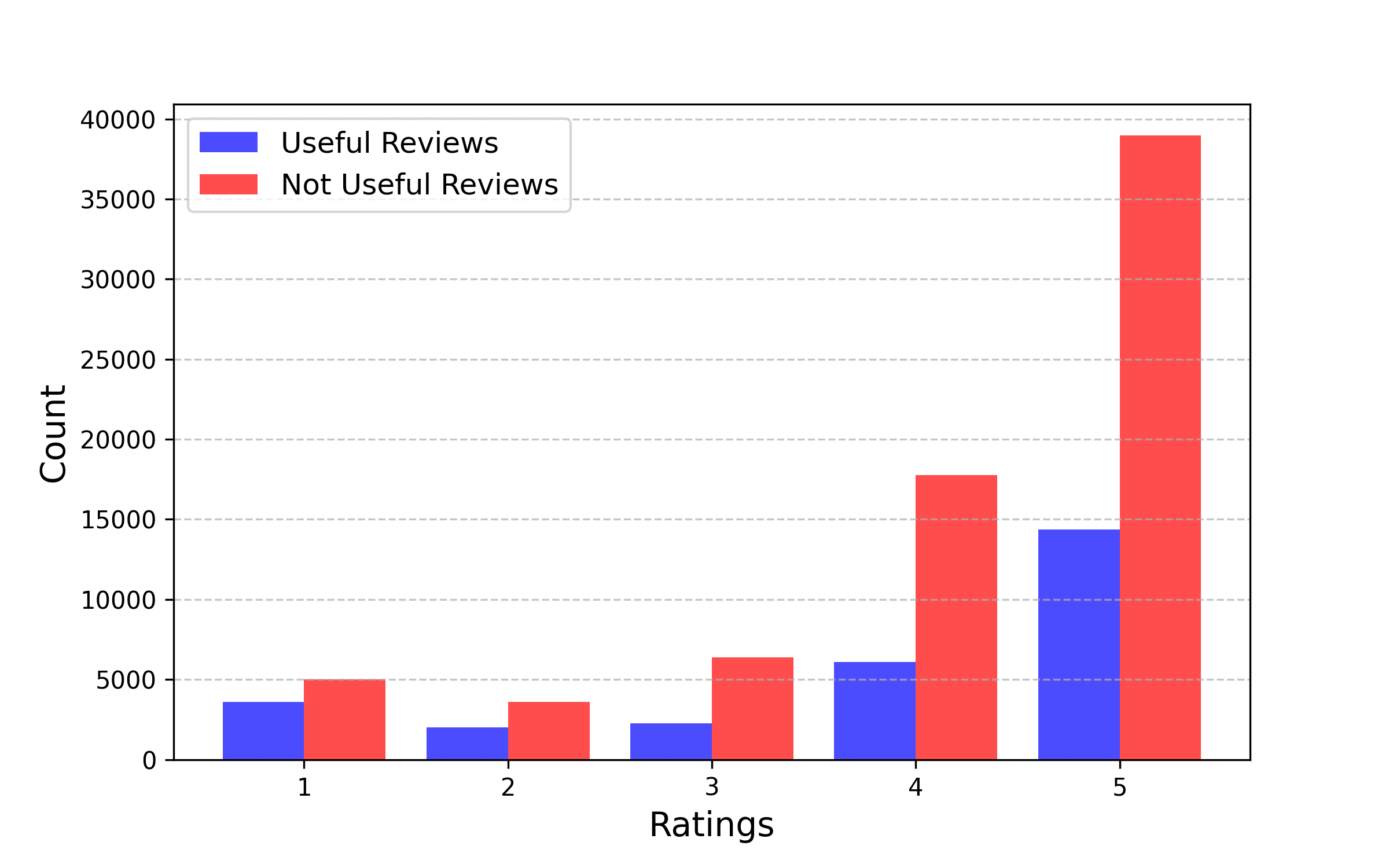}
    \caption{Distributions of ratings for the target (useful reviews) and source (non-useful reviews) data.}
    \label{fig:hist_yelp}
\end{minipage}
\hfill
\begin{minipage}[t]{0.45\textwidth}
    \centering
    \includegraphics[width=\linewidth]{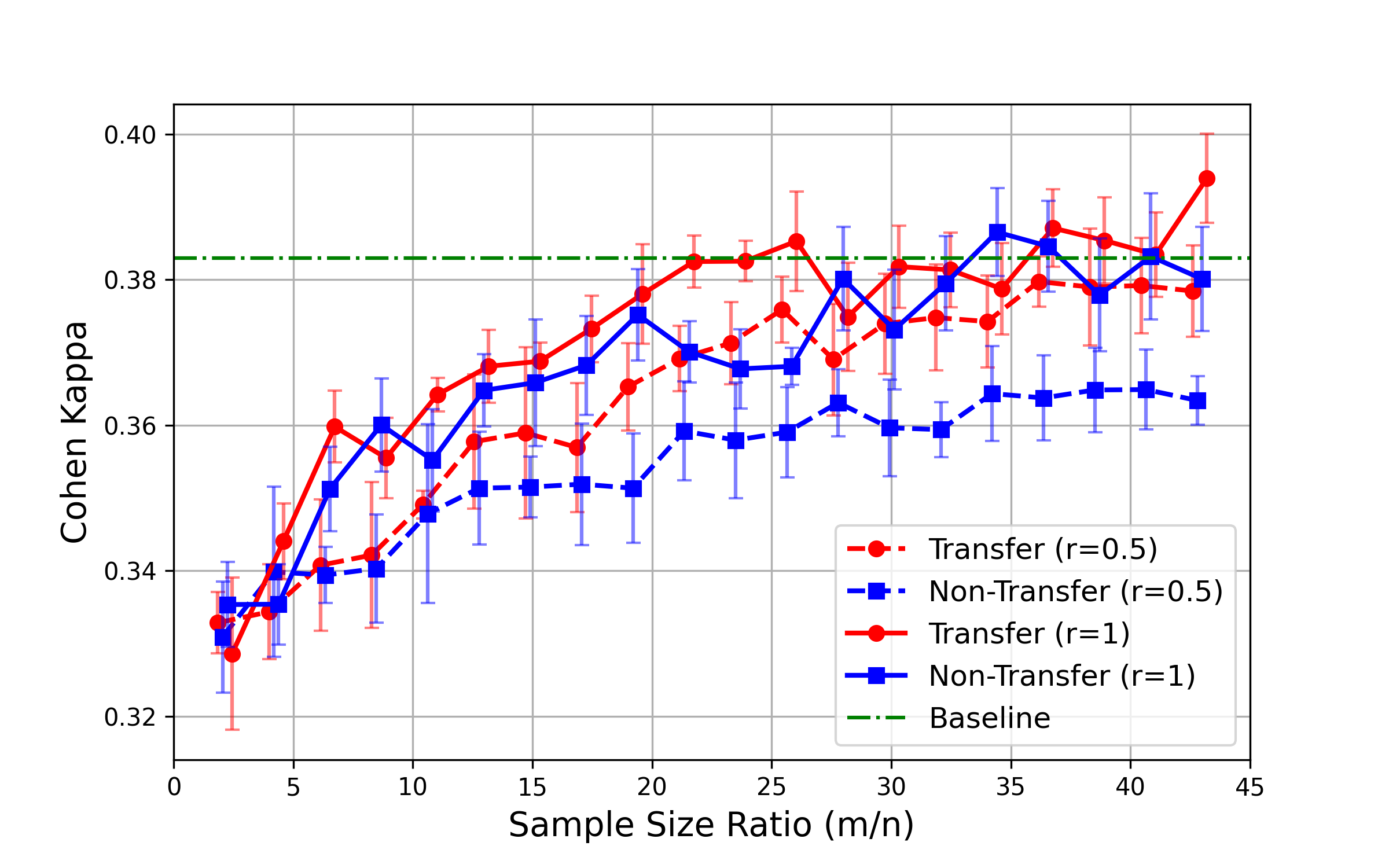}
\caption{Cohen’s $\kappa$ with standard error bar on the CoDSA test set, evaluated over five train–test splits and synthetic-to-training sample size ratios from $m/n = 0$ to $45$.}
    \label{fig:ga_yelp}
\end{minipage}
\end{figure}

Given trained diffusion models, we generate synthetic samples in a balanced manner, with $m_k=\alpha_k m$ and $\alpha_k=1/5$; $k=1,\cdots,5$. For non-transfer-CoDSA and transfer-CoDSA, we tune the synthetic sample size $m$ and the split ratio $r$ in the range $\{0,n,\cdots, 45 n\}$ and $\{0.5,1\}$ to reduce the computational cost. Given augmented samples, we then refit the logistic regression under the same settings as the baseline model to predict ratings from reviews in the test dataset. In other words, we apply the logistic loss to \eqref{CoDSA1} to obtain regression estimates.

To evaluate performance on ordered ratings ranging from 1 to 5, we adopt Cohen's Kappa statistic \cite{cohen1960kappa} as the primary evaluation metric. Cohen's Kappa ($\kappa$) is defined as:
$\kappa = \frac{p_o - p_e}{1 - p_e}$, where $p_o$ is the observed agreement between predictions and actual ratings, and $p_e$ represents the expected agreement by chance. Unlike the commonly used Root Mean Square Error (RMSE), Cohen's Kappa adjusts for chance agreement, making it particularly effective for evaluating performance on imbalanced categorical data.

Figure \ref{fig:ga_yelp} shows that augmenting the training data with synthetics lifts both transfer-CoDSA and non-transfer CoDSA above the baseline trained on the original $n=23{,}313$ real samples (green dashed line at $\kappa\!\approx\!0.383$). Allocating all real data to the generator ($r=1$) is superior: the transfer model peaks at roughly $m/n\!\approx\!43$ ( $m\!\approx\!1{,}000{,}000$) with $\kappa\!\approx\!0.395$, improving on the baseline by about $0.0120$
($+3.1\%$); the non-transfer model tops out near $m/n\!\approx\!34$ ($m\!\approx\!800{,}000$) at $\kappa\!\approx\!0.387$, a gain of roughly 0.0045 ($1.2\%$). In contrast, the partial-allocation setting ($r=0.5$) never reaches these levels, confirming that full allocation is preferable. Overall, transfer learning delivers about a modest improvement 
over the non-transfer variant, indicating that pre-training enables the model to glean richer domain-specific information from synthetic data that the non-transfer approach cannot fully exploit.

In summary, while increasing the volume of synthetic data improves non-transfer and transfer models, the gains are significantly higher for the transfer model. This result suggests that synthetic data alone may not necessarily enhance a model’s predictive power. However, when combined with domain adaptation via transfer learning, we can generate synthetic samples that exhibit meaningful variation and capture additional domain insights, leading to enhanced performance for rating prediction. In particular, incorporating insights from reviewers who find reviews non-useful helps improve rating prediction for those who find them useful.

\subsection{Age prediction from facial images}
\label{age-prediction}

This experiment utilizes the UTKFace dataset (\url{https://susanqq.github.io/UTKFace/}) for age prediction, a large-scale image dataset widely used in computer vision tasks such as age estimation, gender classification, and ethnicity recognition. The dataset contains over 23,705 images spanning a broad age range (0--116 years), multiple ethnicities, and genders. However, it exhibits a significant class imbalance, with 9,424 White, 4,578 Black, 3,584 Asian, 4,047 Indian, and 2,072 individuals from other ethnic backgrounds. 

We group the data into five ethnicity groups $C_k$ for $k=1,\dots,5$, defined by the ethnicity variable $E$. First, we randomly sample a test set and an independent validation set of 3,000 and 2,500 images, respectively, ensuring equal representation (600 and 500 images per ethnicity) for unbiased evaluation. Then, we use the remaining 18,205 images for training, with the validation set guiding early stopping and the test set for testing.

For the age prediction task, let $\Z$ denote the outcome variable $ Y$ (age), $X$ the input image, and $C_k$ the categorical ethnicity variable, $ k=1, \cdots, 5$. To address sampling bias, including under- and over-representation across ethnic groups, we apply the CoDSA method to augment the dataset and mitigate imbalance in \eqref{CoDSA1}:
$
\min_{\theta\in\Theta} \; \ell(\theta(\x),y)
$
with $\ell$ the $L_2$ loss, that is 
$\ell(\theta(\x),y)=(y-\theta(\x))^2$, where $\Theta$ is a class of convolutional neural network (CNN) models. 
We train the CNN on the augmented images. We benchmark the performance of the CoDSA-enhanced model against a baseline CNN trained on the original dataset comprising 18,205 images, employing early termination based on validation set performance.

For synthetic image generation, we utilize Stable Diffusion 2.0 (\url{https://huggingface.co/stabilityai/stable-diffusion-2}), a latent diffusion model with an encoder and a decoder. We fine-tune it on a selected subset of 10,000 training images, updating only the diffusion modules while keeping the Variational Autoencoder (VAE) and text encoder unchanged.

Synthetic images are generated according to the conditional distribution $P(Z | Z \in C_k)$, for $k=1,\dots,5$, using Stable Diffusion. In this case, the split ratio is $r = \frac{10,000}{18,205}$. We combine the remaining 8,205 original images with $m$ synthetic images using the allocation vector $\bm{\alpha}$ as defined by ~\eqref{eq-alpha} for a balanced augmented data, varying $m$ from 10,000 to 25,000. We evaluate model performance by examining the root mean squared error (RMSE) for age predictions on the test set.

Experiments were repeated five times with independent random train–test splits; Figure~\ref{fig:da_face} shows the mean RMSE with standard error bars. CoDSA outperforms the baseline for $m=0$, but its error rises as $m$ increases. The best RMSE—7.952 at $m=10,000$ marks a substantial improvement over the baseline value of 9.81, attributable to adaptive synthetic augmentation.

\begin{figure}[!htbp]
\centering
\includegraphics[scale=0.5]{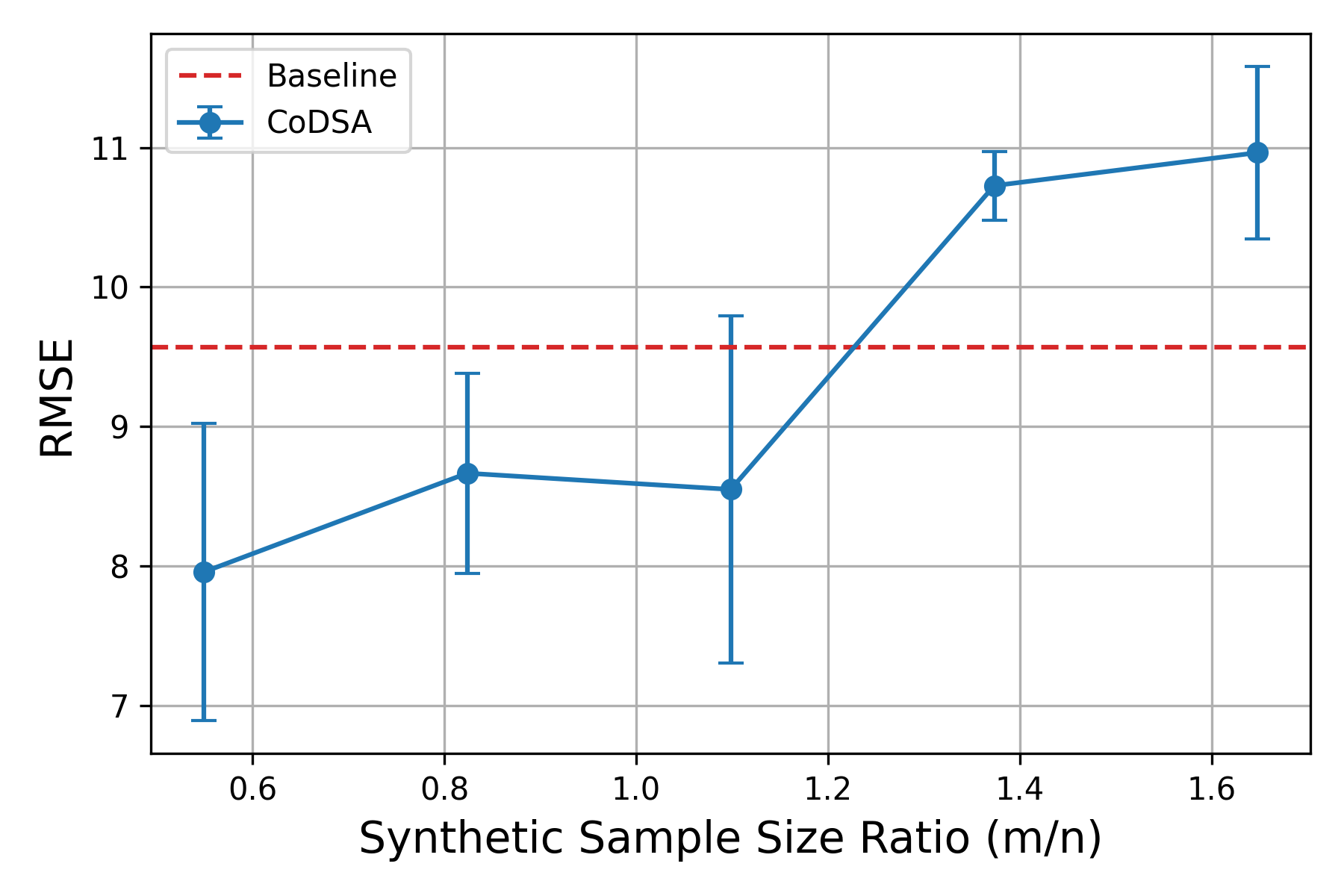}
\caption{Mean test root mean squared error ($\pm$ one standard error) for CoDSA on facial-age prediction across synthetic-to-real ratios $(m/n)$ with a split ratio $r = 10,000/18,205$, averaged over five random train–test splits and compared with a convolutional neural network (CNN) baseline.}
\label{fig:da_face}
\end{figure}

\subsection{Anomaly detection on MNIST-USPS data}

We evaluate the effectiveness of CoDSA on unsupervised tasks involving unstructured data using the MNIST-USPS benchmark dataset (\url{https://github.com/LukasHedegaard/mnist-usps}). This benchmark combines two handwritten digit datasets, MNIST and USPS, which differ in resolution, writing style, and digit variability. These discrepancies present a challenging setting for domain adaptation in anomaly detection.

In our anomaly detection setting, we define digit ``3'' images as normal, while digit ``5'' images are considered anomalous. We adopt an autoencoder-based detection method \citep{zhang2021towards}, which learns efficient representations by compressing and reconstructing input images. The autoencoder is exclusively trained on digit ``3'' images, enabling accurate reconstruction for normal digit ``3'' samples while producing high reconstruction errors for unseen anomalous digit ``5'' samples. The reconstruction error (loss) is 
$\ell(\x, \hat{\x}) = \|\x - \hat{\x}\|_2^2$,
where $\x$ denotes the original images,  $\hat{\x}$ denotes their corresponding reconstructions, and  $\ell$ serves as the anomaly score. Higher reconstruction loss indicates anomalous samples.

The autoencoder training set comprises 5,981 MNIST digit ``3'' images and 508 USPS digit ``3'' images, capturing variations of normal digits across both domains. For evaluation, we construct a balanced test set containing 150 USPS digit ``3'', 150 MNIST digit ``3'', 150 USPS digit ``5'', and 150 MNIST digit ``5'' images.

To enhance detection performance, we apply CoDSA to augment training images through synthetic image generation, enabling improved discrimination between normal and anomalous samples. Specifically, let $\Z$ denote images from either domain $\Z \in \{C_1, C_2\}$, representing MNIST and USPS, respectively. We generate synthetic images using a conditional diffusion model following $P(\Z| \Z\in C_k), k=1,2$. For MNIST-style digit ``3'' images, we utilize a pretrained conditional diffusion model \citep{TeaPearce}. Due to limited USPS data, this model is fine-tuned on USPS images (excluding digit ``3'' and digit ``5'' images) to realistically generate USPS-style digits, ensuring synthetic USPS images accurately reflect the dataset without direct replication. We select the allocation vector $\bm{\alpha}$ according to the optimal weight equation from \eqref{eq-alpha} with $q_k=0.5$, ensuring a balanced distribution of synthetic samples across domains and varying $m \in \{0, 6000, 12000\}$. The results measured in terms of AUC are shown in Table~\ref{tab:auc_results}.

\begin{table}[h]
\centering
\caption{Mean AUC scores (standard error) for CoDSA across different synthetic sample sizes $m$ over
five random train–test splits.}
\begin{tabular}{lccc}
\hline
& \textbf{USPS-AUC} & \textbf{MNIST-AUC} & \textbf{Overall AUC} \\
\hline
CoDSA ($m=0$) & 0.8321 (0.0209) & 0.8546 (0.0286) & 0.8266 (0.0180) \\
CoDSA ($m=6000$) & 0.8948 (0.0159) & 0.8949 (0.0220)& 0.8860 (0.0142)\\
CoDSA ($m=12000$) & 0.8957 (0.0150) & 0.9320 (0.0176)& 0.9014 (0.0111) \\
\hline
\end{tabular}
\label{tab:auc_results}
\end{table}

CoDSA significantly enhances detection performance. For $m=60,00$, the allocation vector $\bm \alpha=(\alpha_1,\alpha_2)$ is $(0.044, 0.956)$ for MNIST and USPS, substantially improving USPS performance while minimally affecting MNIST due to limited synthetic MNIST samples. Increasing to $m=12,000$ adjusts $\bm \alpha$ to $(0.272, 0.728)$, resulting in performance gains across both domains.

\section{Discussions}
\label{discussion}

This paper introduces Conditional Data Synthesis Augmentation (CoDSA), a novel framework designed for multimodal data augmentation. Our theoretical analysis demonstrates that CoDSA leverages conditional synthetic data generation to effectively balance underrepresented subpopulations and reduce statistical errors through adaptive sampling. A core insight of CoDSA involves two critical trade-offs. The first trade-off is the classical balance between approximation errors of the underlying function and estimation errors. The second trade-off involves balancing adaptive weighting effectiveness against generation errors arising during synthetic data generation.

Central to our approach is the generative model for producing high-fidelity synthetic samples that closely mimic the original data. This capability addresses class imbalance and undersampling issues within specific regions of interest. Through adaptive sampling, CoDSA effectively resolves several practical challenges, including undersampling in classification tasks, covariate shifts in regression scenarios, and particularly enhancing multimodal data augmentation—an area that remains largely unexplored.

Experimental evaluations across transfer learning tasks, including rating prediction from Yelp reviews, age prediction from facial images, and anomaly detection, demonstrate substantial improvements in predictive accuracy using adaptive synthetic augmentation. Specifically, combined transfer learning with adaptive synthetic data expansion yields robust performance gains, even with highly imbalanced data or data originating from disparate domains.

Additionally, our empirical findings highlight the importance of managing the synthetic sample size. While larger synthetic datasets typically enhance performance, excessive augmentation eventually encounters diminishing returns due to cumulative generation errors. This observation underscores the necessity of carefully selecting synthetic data volume and adaptive weights to achieve an optimal balance.

Future research avenues span both methodological and theoretical fronts:
(1) \textbf{Adaptive weighting and hyperparameter optimization.} It is of significance to refine adaptive weighting strategies and develop more efficient computational tools for identifying optimal hyperparameters within the CoDSA framework.
  (2) \textbf{Scaling pretraining data.} The ablation experiments in Supplementary Appendix D demonstrate that larger pretraining datasets consistently improve CoDSA’s stability and accuracy. Further empirical validation on diverse real‐world datasets will be essential to confirm these theoretical insights and to characterize how synthetic‐data quality interacts with transfer learning performance.
  (3) \textbf{Cross‐fitting enhancements.} The cross‐fitting study in Supplementary Appendix D indicates that, when computational resources permit, cross‐fitting can further stabilize CoDSA. Further work will involve both empirical investigations and theoretical analyses of cross-fitting strategies within this framework.
  (4) \textbf{Extension to inference tasks.} CoDSA’s core idea—leveraging synthetic data for downstream objectives—could be extended to statistical inference to improve estimator performance. A key challenge will be rigorously accounting for generative‐model error when drawing inference on target distributions.

\appendix 
\section{Conditional generation via diffusion models}
\label{transfer-diffusion}

A conditional diffusion model incorporates both forward and backward diffusion processes. For notational simplicity, we use $\zeta$ to denote the categorical condition that $\zeta=k$ if $\Z \in C_k$.

\noindent \textbf{Forward process. } The forward process systematically transforms a random vector $\U(0)$ from the conditional distribution of $\U$ given the condition $\zeta$ into pure white noise by progressively
injecting white noise into a differential equation defined with the Ornstein-Uhlenbeck process, leading to diffused distributions from the initial state $\U(0)$:
\begin{equation}
\label{forward}
\mathrm{d}\U(t)=-{b}_{t} \U(t)\mathrm{d}t+\sqrt{2{b}_t}\mathrm{d}W(t),\quad t \geq 0,
\end{equation}
where $\U(0)$ follows a conditional probability density $p_{\bm{u}|\zeta}$, $\{W(t)\}_{t\geq 0}$ represents a standard Wiener process and ${b}_t$ is a non-decreasing weight function. Under \eqref{forward}, $\U(t)$ given $\U(0)$ follows $N(\mu_{t}\U(0),\sigma^2_{t}\bm I)$, where $\mu_{t}=\exp(-\int_0^t {b}_s\mathrm{d} s)$ and $\sigma^2_{t}=1-\mu_{t}^2$. Here, we set
${b}_s=1$, which results in $\mu_{t}=\exp{(-t)}$ and $\sigma^2_{t}=1-\exp{(-2t)}$. Practically, the process terminates at a sufficiently large $\overline{t}$, ensuring the distribution of $\U(t)$,
a mixture of $\U(0)$ and white noise, resembles the standard Gaussian vector.

\noindent \textbf{Backward process.} Given $\U(\overline{t})$ in \eqref{forward}, a backward process is employed for sample generation for $\U(0)$. Assuming \eqref{forward} satisfies certain conditions \citep{anderson1982reverse}, the backward process $\bm V(t)=\U(\overline{t}-t)$, starting with $\U(\overline{t})$, is derived as:
\begin{equation}
\label{reverse}
\mathrm{d}\bm V(t)={b}_{\overline{t}-t}(\bm V(t)+2\nabla\log p_
{\bm{u}(\overline{t}-t)}(\U(\overline{t}-t))\mathrm{d}t+\sqrt{2{b}_{\overline{t}-t}}\mathrm{d}W(t); \quad t \geq 0,
\end{equation}
where $\nabla\log p_{\bm u}$ is the score function which represents the gradient of $\log p_{\bm u}$.

\noindent \textbf{Conditional score matching.} To estimate the unknown score function, we minimize a matching loss between the score and its approximator $\theta_u$:
$\int_{0}^{\overline{t}}\mathrm{E}_{\bm{u}(t)}\|\nabla \log p_{\bm{u}(t)|\zeta}(\U(t))-\theta_u(\U(t),\zeta,t)\|^2\mathrm{d}t$,
where $\|\x\|=\sqrt{\sum^{d_x}_{j=1}\x_j^2}$ is the Euclidean norm, which is equivalent to minimizing the following loss \cite{oko2023diffusion},
\begin{equation}
\label{loss_2}
\int_{\underline{t}}^{\overline{t}}\mathrm{E}_{\bm u(0),\zeta}\mathrm{E}_{\bm{u}(t)|\bm u(0)}\|\nabla \log p_{\bm{u}(t)|\bm u(0)}(\U(t)|\U(0))-\theta(\U(t),\zeta,t)\|^2\mathrm{d}t,
\end{equation}
with $\underline{t}=0$. In practice, to avoid score explosion due to $\nabla \log p_{\bm{u}(t)|\bm u(0)} \rightarrow \infty$ as $t\rightarrow 0$ and to ensure training stability, we restrict the integral interval to $\underline{t}>0$ \citep{oko2023diffusion,chen2023improved} in the loss function.
Then, both the integral and $\mathrm{E}_{\bm{u}(t)|\bm u(0)}$ can be precisely approximated by sampling $t$ from a uniform distribution on $[\underline{t},\overline{t}]$ and a sample of $\U(0)$ from the conditional distribution of $\U(t)$ given $\U(0)$. Hence, for a conditional generation of $\U$, we use the diffusion model to learn the latent distribution of $\U$ from the training data $\{\bm{u}^i\}_{i=1}^{n_{g}}$, with $\U(0)=\U$ in
\eqref{forward}. Based on a training sample $\{\bm{u}^i,\zeta^i\}_{i=1}^{n_{g}}$, the empirical score matching loss in \eqref{loss_2} $L_{u}$ is 
$$L_{u}(\theta_u)
=\sum_{i=1}^{n_{g}} \int_{\underline{t}}^{\overline{t}} \mathrm{E}_{\bm{u}(t)|\bm u(0)}\|\nabla \log p_{\bm{u}(t)|\bm u(0)}(\U(t)|
\U^i)-\theta_u(\U(t),\zeta^i,t)\|^2 \mathrm{d}t.$$

\noindent \textbf{Neural network. } An $\mathbb{L}$-layer network $\Phi$ is defined by a composite function
$
\Phi(\x)=(\bm{\mathrm{A}}_\mathbb{L}\sigma(\cdot)+\bm{b}_\mathbb{L})\circ\cdots(\bm{\mathrm{A}}_2\sigma(\cdot)+\bm{b}_2)\circ (\bm{\mathrm{A}}_1\x+\bm{b}_1),
$
where $\bm{\mathrm{A}}_i\in \R^{d_{i+1}\times d_i}$ is a weight matrix and $\bm b_i \in \R^{d_{i+1}}$ is the bias of a linear transformation of the $i$-th layer, and $\sigma$ is the ReLU activation function, defined as $\sigma(\x)=\max(\x,0)$.
Then, the parameter space $\Theta$ is set as $\mathrm{NN}(d_{in},d_{out},\mathbb{L},\mathbb{W},\mathbb{S},\mathbb{B},\E)$ 
with $\mathbb{L}$ layers, a maximum width of $\mathbb{W}$, effective parameter number $\mathbb{S}$, the sup-norm $\mathbb{B}$, and parameter bound $\E$:
\begin{align}
\label{p-space}
\mathrm{NN}(d_{in},d_{out},\mathbb{L},\mathbb{W},\mathbb{S},\mathbb{B},\E)
=&\{ \Phi: d_1=d_{in}, d_{\mathbb{L}+1}=d_{out}, \max_{1\leq i\leq \mathbb{L}}d_i\leq\mathbb{W}, \nonumber\\
&\sum_{i=1}^{\mathbb{L}}(\|\bm{\mathrm{A}}_i\|_0+ \|\bm{b}_i\|_0)\leq \mathbb{S}, 
\|\Phi\|_{\infty}\leq\mathbb{B},\nonumber\\ 
&\max_{1\leq i\leq \mathbb{L}}(\|\bm{\mathrm{A}}_i\|_{\infty}, \|\bm{b}_i\|_{\infty})\leq \E
\},
\end{align}
where $\|\cdot\|_{\infty}$ is the maximal magnitude of entries and $\|\cdot\|_0$ is the number of nonzero entries.

With $\theta_u\in \Theta_{u}=\mathrm{NN}(d_u+1,d_u,\mathbb{L}_u,\mathbb{W}_u,\mathbb{S}_u,\mathbb{B}_u,\E_u)$, the estimated score function for the target task is
$\hat{\theta}_u=\argmin_{\theta_u\in\Theta_{U}}L_{U}(\theta_u)$.
Finally, $P_{\z|\zeta}$ is estimated by $\hat{P}_{\z|\zeta}=\hat{P}_{u|\zeta}(g^{-1})$.

\noindent \textbf{Generation.} To generate a random sample of $\bm V(t)$, we replace the score $\nabla\log p_{\bm{u}(\overline{t}-t)}$ by its estimate $\hat \theta$ in \eqref{reverse} to yield $\bm V(t)$ in the backward equation. For implementation, we may utilize a discrete-time approximation of the sampling process, facilitated by numerical methods for solving stochastic differential equations, such as Euler-Maruyama and stochastic Runge-Kutta methods \cite{song2020denoising}.

\begin{lemma}
\label{thm_diff_general}
Under Assumption \ref{A_ex}, setting the neural network's structural hyperparameters of $\Theta_u=\mathrm{NN}(\mathbb{L},\mathbb{W},\mathbb{S},\mathbb{B},\E)$ as follows: $\mathbb{L}=c_L \log^4\mathcal{K}$, $\mathbb{W}= c_W K\mathcal{K}\log^7\mathcal{K}$, $\mathbb{S}= c_S \mathcal{K}\log^9\mathcal{K}$, 
$\log\mathbb{B}= c_B\log \mathcal{K}$, $\log\E= c_E\log^4\mathcal{K}$, with diffusion stopping criteria from \eqref{forward}-\eqref{reverse} 
as $\log \underline{t}= -c_{\underline{t}}\log \mathcal{K}$ and $\overline{t}=c_{\overline{t}}\log \mathcal{K}$, where $\{c_L,c_W, c_S, c_B,c_E,c_{\underline{t}},c_{\overline{t}}\}$ are sufficiently large constants, yields the error in diffusion generation via transfer learning: 
\begin{eqnarray}
\label{c-rate-general}
\mathrm{E}_{\zeta}[\mathrm{TV}(p^0_{\bm u|\zeta},\tilde{p}_{\bm u|\zeta})]\leq \beta_{n_{g}} + \delta_{n_{g}}, 
\end{eqnarray}
with probability exceeding $1-\exp(-c_u n^{1-\xi}_{n_{g}} (\beta_{n_{g}} + \delta_{n_{g}})^2)$ for
some constant $c_u>0$ and a small $\xi>0$. Here, $\beta_{n_{g}}$ and $\delta_{n_{g}}$ represent the estimation and approximation errors, given by:
$ \beta_{n_{g}}\asymp 
\sqrt{\frac{\mathcal{K}\log^{19}\mathcal{K}}{n_{g}}},\quad \delta_{n_{g}}\asymp \mathcal{K}^{-\frac{r}{d_x+d_z}}\log^{\frac{r}{2}+1}\mathcal{K}$.
Setting $\beta_{n_{g}}= \delta_{n_{g}}$ to solve for $\mathcal{K}$, and neglecting the logarithmic term, 
leads to 
$\beta_{n_{g}} = \delta_{n_{g}} \asymp n_{g}^{-\frac{r_{u}}{d_u+2r_{u}}}\log^m n_{g}$ with 
the optimal $\mathcal{K} \asymp n_{g}^{\frac{d_u}{d_u+2r_{u}}}$, with $\beta_u = \max(\frac{19}{2}, \frac{r_u}{2}+1)$. 
\end{lemma}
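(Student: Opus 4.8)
The plan is to adapt the score-based diffusion error analysis of \cite{oko2023diffusion,chen2023improved} to the conditional, latent setting of Assumption~\ref{A_ex}. First I would bound $\mathrm{E}_\zeta[\mathrm{TV}(p^0_{\bm u|\zeta},\tilde p_{\bm u|\zeta})]$ by a triangle-inequality decomposition into three pieces: (i) an \emph{initialization} error incurred by starting the backward process \eqref{reverse} from a standard Gaussian rather than the true terminal law $p_{\bm u(\overline t)|\zeta}$; (ii) an \emph{early-stopping} error between the true early-stopped law $p_{\bm u(\underline t)|\zeta}$ and the target $p^0_{\bm u|\zeta}$; and (iii) a \emph{score-matching} error that governs the discrepancy between the true and the estimated reverse dynamics. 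Term (i) decays like $\exp(-\overline t)$ thanks to the exponential contraction of the Ornstein--Uhlenbeck semigroup in \eqref{forward}, so the choice $\overline t=c_{\overline t}\log\mathcal K$ renders it negligible relative to the stated rate; term (ii) is controlled through the smoothness of $f_k$ together with the sub-Gaussian factor, and the choice $\log\underline t=-c_{\underline t}\log\mathcal K$ makes it of the same order. The substance of the proof lies in term (iii).

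For (iii), I would apply Girsanov's theorem to the reverse SDE \eqref{reverse} and Pinsker's inequality to bound the squared total variation by the integrated conditional score-estimation error
\[
\int_{\underline t}^{\overline t}\mathrm{E}_{\bm u(t),\zeta}\big\|\hat\theta_u(\U(t),\zeta,t)-\nabla\log p_{\bm u(t)|\zeta}(\U(t))\big\|^2\,\mathrm{d}t,
\]
where $\hat\theta_u$ minimizes the empirical matching loss $L_u$. A standard oracle inequality then splits this quantity into an \emph{approximation error} $\delta_{n_g}$, measuring how well the network class $\Theta_u$ can represent the true score, and an \emph{estimation error} $\beta_{n_g}$, measuring the statistical fluctuation of the empirical minimizer.

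For the approximation error, I would explicitly construct a ReLU network in $\Theta_u$ realizing the score. The factorization $p_{\bm u|y=k}(\bm u)=\exp(-c_k\|\bm u\|^2/2)\,f_k(\bm u)$ with $f_k\in\mathcal{C}^{r_u}$ gives a closed form for $\nabla\log p_{\bm u(t)|\zeta}$ in which the Gaussian component is exactly linear and representable, so that only the $\mathcal{C}^{r_u}$ factor $f_k$ (smoothed by the forward convolution) must be approximated nonparametrically; a localized polynomial/B-spline network then attains the rate displayed for $\delta_{n_g}$ once the depth, width, and sparsity scale polylogarithmically in $\mathcal K$ as prescribed (with $\mathbb S\asymp\mathcal K\log^9\mathcal K$). \textbf{The main obstacle} is uniform control near $t=\underline t$, where $\nabla\log p_{\bm u(t)|\zeta}$ can blow up because $\sigma_t^2=1-\exp(-2t)\to0$; I would handle this by truncating the spatial domain to radius $O(\sqrt{\log\mathcal K})$---permissible by the sub-Gaussian tail in Assumption~\ref{A_ex}---and by tracking how $\sigma_t^{-1}$ inflates the Lipschitz constants of the target, which is precisely what produces the high powers of $\log\mathcal K$ appearing in both $\beta_{n_g}$ and $\delta_{n_g}$.

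For the estimation error, I would bound the score-matching generalization gap over $\Theta_u$ through its $L_2$ metric entropy, which for a ReLU network with effective parameter count $\mathbb S\asymp\mathcal K\log^9\mathcal K$ scales as $\mathbb S$ up to logarithmic factors; a Bernstein-type concentration argument---using the boundedness of the network outputs and of the truncated score---then yields the rate $\beta_{n_g}\asymp\sqrt{\mathcal K\log^{19}\mathcal K/n_g}$ together with the exponential tail $\exp(-c_u n_g^{1-\xi}(\beta_{n_g}+\delta_{n_g})^2)$, where the small $\xi>0$ absorbs the slack from truncation and from the covering-number exponent. Finally, balancing $\beta_{n_g}=\delta_{n_g}$ to solve for $\mathcal K$ delivers $\mathcal K\asymp n_g^{d_u/(d_u+2r_u)}$ and the claimed rate $n_g^{-r_u/(d_u+2r_u)}\log^{\beta_u}n_g$ with $\beta_u=\max(\tfrac{19}{2},\tfrac{r_u}{2}+1)$.
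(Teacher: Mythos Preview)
Your proposal is correct in substance but takes a different, far more self-contained route than the paper. The paper's proof is a two-sentence reduction: it invokes Theorem~2 of \cite{tian2024enhancing} and Theorem~4.1 of \cite{fu2024unveil} for the unconditional diffusion rates under a sub-Gaussian latent density, and the sole new ingredient is how to make a \emph{single} network in $\Theta_u$ handle the categorical condition $\zeta$. This is done by running $K$ parallel group-specific score-approximation subnetworks (one per $C_k$, each borrowed verbatim from the cited references) and appending a final layer that multiplies each subnetwork's output by the corresponding entry of the one-hot encoding of $\zeta$ and sums---which is exactly what produces the extra factor $K$ in the width $\mathbb W=c_W K\mathcal K\log^7\mathcal K$. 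Nothing else changes.

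You instead unpack the entire machinery that lives \emph{inside} those references: the three-term TV decomposition via OU contraction and early stopping, Girsanov/Pinsker to pass to the integrated score error, the explicit ReLU construction exploiting the Gaussian-times-H\"older factorization, and the entropy-based concentration for the empirical minimizer. That is a valid and informative reconstruction, and it buys the reader a standalone argument rather than a pointer to two preprints. The one place where your outline is thinner than the paper is precisely the architectural handling of $\zeta$: your approximation step is phrased ``per $k$'' (approximate each $f_k$ separately), but you never say how the $K$ group-wise approximators are combined into one network taking $(\U(t),\zeta,t)$ as joint input. Spelling out the parallel-subnetwork-plus-selector construction closes that gap and simultaneously explains why $\mathbb W$ carries the factor $K$ while $\mathbb L,\mathbb S$ do not.
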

\begin{proof}[Proof of Lemma \ref{thm_diff_general}]
The result follows from the same arguments as those in Theorem 2 of \cite{tian2024enhancing} and Theorem 4.1 of \cite{fu2024unveil}. For each group \(k\), we employ the same approximation neural network architecture as in these studies to approximate the corresponding conditional score function. In the final step, we append additional layers that approximate the multiplicative interactions between the output of each group-specific network and the one-hot encoded vector \(\zeta\). This modification yields the final output for group \(\zeta\).

This adjustment increases the overall network width by a factor of $K$ compared to the original neural network classes in \cite{tian2024enhancing,fu2024unveil}, since we use $K$ parallel sub-networks, one for each group, to capture the conditional score function. Aside from this increase in the approximation network's width, all other components of the proof remain identical to those in the referenced works.
\end{proof}

\noindent \textbf{Transfer diffusion models.} A common strategy to improve the accuracy of conditional diffusion generation on a target task is to fine-tune a diffusion model that has been pretrained on large-scale source data. This approach leverages the knowledge learned during pretraining and facilitates effective transfer to the target domain. Fine-tuning strategies generally fall into two categories: full-parameter fine-tuning, where all parameters of the pretrained model are updated, and partial fine-tuning, where only a subset of parameters, such as attention layers or the noise prediction head, are updated, while others (e.g., the encoder-decoder backbone) remain frozen, as shown in Figure \ref{fig:da_lda}. This flexibility enables adaptation to new tasks while mitigating overfitting and reducing computational cost.

\begin{figure}[!htbp]
\centering
\includegraphics[scale=0.25]{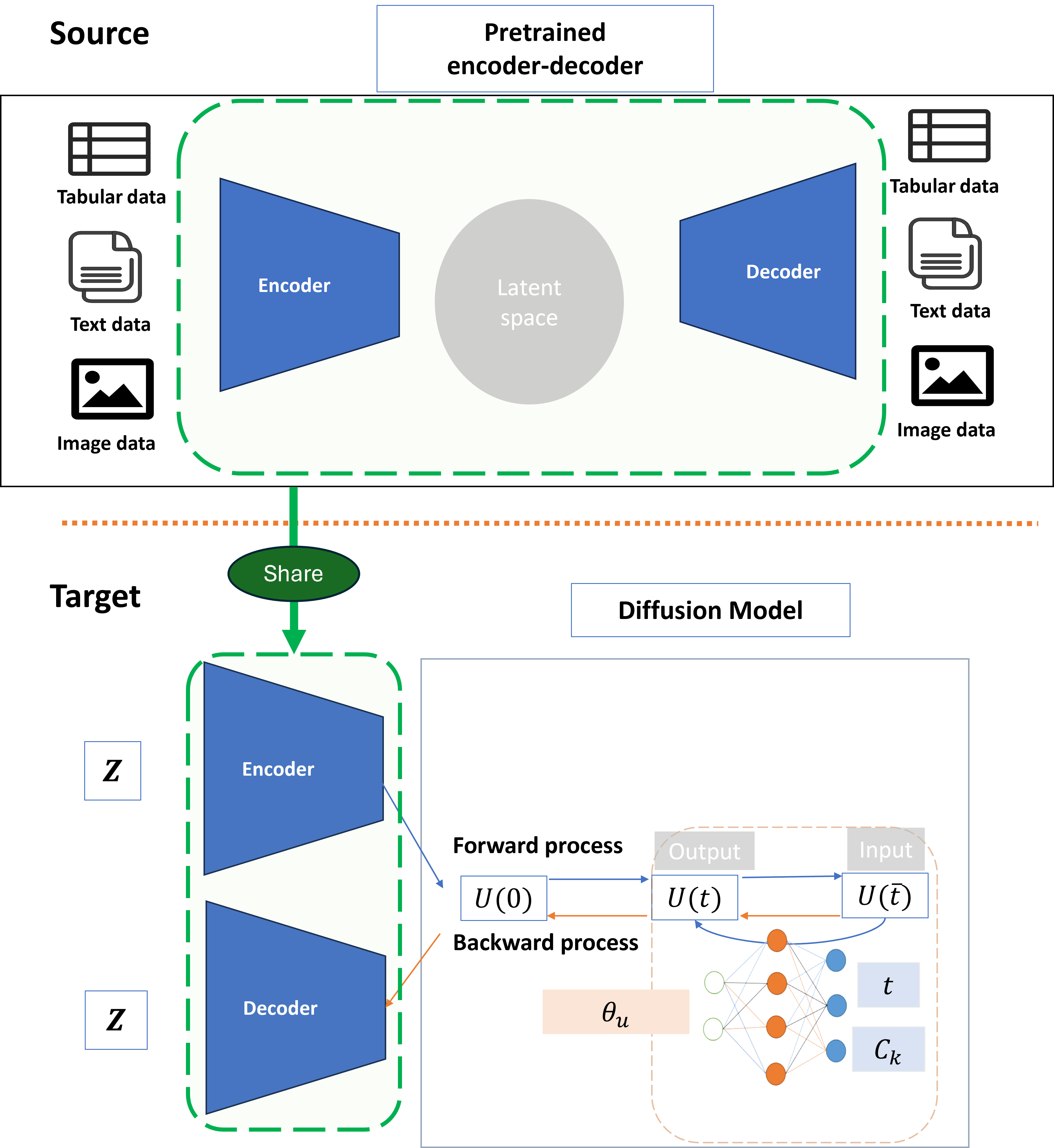}
\caption{Latent diffusion model with pretrained encoder-decoder from source dataset.}
\label{fig:da_lda}
\end{figure}

\begin{prop}[Conditional diffusion via transfer learning]
\label{thm_ug_detail}
Under Assumption \ref{A_ex}, there exists a ReLU network class such that the error for unconditional diffusion generation via transfer learning, measured in Wasserstein distance, satisfies:
$
\mathcal{W}(P_k, \tilde{P}_k) \leq \frac{\varepsilon_u}{p_k}+\varepsilon_s,
$
with probability at least \(1 - \exp(-c_t n_{g}^{1-\xi}\varepsilon_u^2)\) for some constant \(c_t > 0\) and a small $\xi>0$. Here,
$
\varepsilon_u \asymp n_{g}^{-\frac{r_u}{d_u+r_u}} \log^{\beta_u}(n_{g})$, where $\beta_u = \max\left(\frac{19}{2}, \frac{r_u}{2}+1\right)$.
\end{prop}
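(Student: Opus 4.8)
The idea is to transport the latent-space generation guarantee of Lemma~\ref{thm_diff_general} into the data space through the decoder $g$, keeping the generative (diffusion) error and the reconstruction (encoder--decoder) error cleanly separated. Write $P_k=P(\X\mid C_k)$ for the true conditional law; since $\U=f(\X)$, the conditional latent law $p^0_{\bm u\mid k}$ is the pushforward $f_{\sharp}P_k$, while the synthetic law is $\tilde P_k=g_{\sharp}\tilde p_{\bm u\mid k}$. Introducing the intermediate measure $\hat P_k:=g_{\sharp}p^0_{\bm u\mid k}$, i.e.\ the decoded \emph{true} latent density, the triangle inequality for $W$ gives
\[
W(P_k,\tilde P_k)\;\le\;W(P_k,\hat P_k)+W(\hat P_k,\tilde P_k),
\]
and I would control the first term by the reconstruction error and the second by the diffusion error.

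For the reconstruction term I would use the coupling $(\X,g(f(\X)))$ with $\X\sim P(\cdot\mid C_k)$, which is admissible because $\hat P_k$ is precisely the law of $g(f(\X))$ given $C_k$. Since the transport cost of any coupling upper-bounds $W$, this yields $W(P_k,\hat P_k)\le\E\big[\|\X-g(f(\X))\|\ \big|\ C_k\big]$, controlled by Assumption~\ref{A_ex}(1) and contributing the term $\varepsilon_s$. A purely unconditional reading of Assumption~\ref{A_ex}(1) would instead give $\varepsilon_s/p_k$ here, so one reads that reconstruction bound as holding conditionally on each $C_k$.

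For the generation term, both $\hat P_k$ and $\tilde P_k$ are pushforwards through the \emph{same} bounded map $g$, whose range lies in a ball of diameter $2B_x$ by Assumption~\ref{A_ex}(2). On a set of bounded diameter, any two probability measures $\mu',\nu'$ supported there satisfy $W(\mu',\nu')\le 2B_x\,\mathrm{TV}(\mu',\nu')$, and the data-processing inequality gives $\mathrm{TV}(g_{\sharp}\mu,g_{\sharp}\nu)\le\mathrm{TV}(\mu,\nu)$; chaining these,
\[
W(\hat P_k,\tilde P_k)\;\le\;2B_x\,\mathrm{TV}\big(p^0_{\bm u\mid k},\tilde p_{\bm u\mid k}\big).
\]
Lemma~\ref{thm_diff_general} controls only the $\zeta$-averaged divergence, $\E_{\zeta}[\mathrm{TV}(p^0_{\bm u\mid\zeta},\tilde p_{\bm u\mid\zeta})]=\sum_{k}p_k\,\mathrm{TV}(p^0_{\bm u\mid k},\tilde p_{\bm u\mid k})\le\varepsilon_u$, so isolating a single class costs a factor $1/p_k$, giving $\mathrm{TV}(p^0_{\bm u\mid k},\tilde p_{\bm u\mid k})\le\varepsilon_u/p_k$. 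Absorbing $2B_x$ into the constant produces the $\varepsilon_u/p_k$ contribution; adding the two terms gives the stated bound, and the high-probability event is inherited verbatim from Lemma~\ref{thm_diff_general} with $\varepsilon_u\asymp\beta_{n_g}+\delta_{n_g}$, so the failure probability is $\exp(-c_t n_g^{1-\xi}\varepsilon_u^{2})$.

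The genuinely delicate step is this last one: decoupling the single-class generative error from the aggregate guarantee (the $1/p_k$ amplification) while simultaneously changing the metric from TV in latent space to $W$ in data space, and doing so without letting the reconstruction error leak into the generative term. One must also decide how to push the latent error through $g$: the bounded-support conversion above only preserves the TV \emph{rate} of Lemma~\ref{thm_diff_general}, so to obtain the sharper exponent $r_u/(d_u+r_u)$ one would instead import a direct latent \emph{Wasserstein} bound from the cited diffusion analyses and replace the bounded-diameter step by the Lipschitz-pushforward inequality $W(g_{\sharp}\mu,g_{\sharp}\nu)\le\mathrm{Lip}(g)\,W(\mu,\nu)$, which requires upgrading Assumption~\ref{A_ex} from decoder boundedness to the continuous-differentiability/compressibility condition invoked after Theorem~\ref{thm2}.
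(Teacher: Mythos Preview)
Your proof is correct and follows essentially the same route as the paper: triangle-inequality split through the decoded true latent law $g_{\sharp}p^{0}_{\bm u\mid k}$, bounded-range conversion $W\le B_x\cdot\mathrm{TV}$ together with data processing for the diffusion piece, the $1/p_k$ extraction from the $\zeta$-averaged guarantee of Lemma~\ref{thm_diff_general}, and the natural coupling $(\X,g(f(\X)))$ for the reconstruction piece. You are in fact more careful than the paper on two points it glosses over---the conditional-versus-unconditional reading of the reconstruction bound $\varepsilon_s$, and the exponent mismatch between the TV rate $n_g^{-r_u/(d_u+2r_u)}$ delivered by Lemma~\ref{thm_diff_general} and the $n_g^{-r_u/(d_u+r_u)}$ asserted in the proposition.
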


\begin{proof}[Proof of Proposition \ref{thm_ug_detail}]
Let \(\bar{P}_{\z|\zeta}\) denote the distribution of the reconstructed target variable \(\bar{\Z} = g(\bm{U})\), with latent representation \(\bm{U} = f(\Z)\). By the triangle inequality for Wasserstein distances, we have the following decomposition:
$W\bigl(P_{\z|\zeta}, \tilde{P}_{\z|\zeta}\bigr)
\leq W\bigl(\bar{P}_{\z|\zeta}, \tilde{P}_{\z|\zeta}\bigr)
+ W\bigl(P_{\z|\zeta}, \bar{P}_{\z|\zeta}\bigr)$.

The first term on the right-hand side captures the error introduced by the diffusion model in latent space, while the second term represents the reconstruction error due to the mapping \(g\).

Using the boundedness of $g$, we obtain
$W\bigl(\bar{P}_{\z|\zeta}, \tilde{P}_{\z|\zeta}\bigr)
\leq B_x \, \mathrm{TV}\bigl(\bar{P}_{\z|\zeta}, \tilde{P}_{\z|\zeta}\bigr)
\leq  B_x\mathrm{TV}\bigl(\bar{P}_{\bm{u}|\zeta}, \tilde{P}_{\bm{u}|\zeta}\bigr)$,
where \(\bar{P}_{\bm{u}|\zeta}\) and \(\tilde{P}_{\bm{u}|\zeta}\) denote the true and estimated distributions of the latent variable \(\bm{u}\).

Under Assumption \ref{A_ex} and applying Lemma \ref{thm_diff_general}, we have
$
\E_{\zeta}\bigl[W\bigl(\bar{P}_{\bm{u}|\zeta}, \hat{P}_{\bm{u}|\zeta}\bigr)\bigr] \leq \varepsilon_u,
$
with probability at least \(1 - \exp(-c_t n_{g}^{1-\xi}\varepsilon_u^2)\). Decompose the expectation with respect to the discrete $\zeta$, we get the bound of $\frac{\varepsilon_u}{p_k}$.

For the reconstruction term, by the definitions of Wasserstein distance and the reconstruction error,
$W\bigl(P^0_{\z|\zeta}, \bar{P}_{\z|\zeta}\bigr)
\leq \E\bigl[\|g(f(\z)) - \z\|\bigr]
\leq \varepsilon_s$.
Combining these bounds completes the proof.
\end{proof}

\section{Proofs}
\label{sec:proof}

 We first state Lemma 11 from \citep{tian2024enhancing}, which will be utilized in the subsequent proofs.

\begin{lemma}[Lemma 11 from \cite{tian2024enhancing}]
\label{large-d}
Suppose the class of functions \(\mathcal{F}\) satisfies the Bernstein-type condition for independent (but not necessarily identically distributed) random variables \(\Z_1, \dots, \Z_m\), i.e., for some constant \(U > 0\), we have
\[
\E|f(\Z_j)-\E f(\Z_j)|^k\leq \frac{1}{2}k! v_j U^{k-2}, \quad j=1,\dots,m,\quad k\geq 2.
\]
Define
$\varphi(\varepsilon_m,v,U) = \frac{m \varepsilon_m^2}{2\left[4v + \frac{\varepsilon_m U}{3}\right]}$,
where \( v \geq \frac{1}{m}\sum_{j=1}^{m} v_j \geq \frac{1}{m}\sum_{j=1}^{m}\mathrm{Var}(f(\Z_j)) \). Assume the following conditions hold:
$\varepsilon_m \leq \frac{\gamma v}{4U}$,
\begin{align}
\int_{\gamma \varepsilon_m/8}^{\sqrt{v}} H^{1/2}(u,\mathcal{F})\,du &\leq \frac{\sqrt{m}\,\varepsilon_m \gamma^{3/2}}{2^{10}}, \quad \text{for some } 0<\gamma<1,
\end{align}
where $H(\cdot, \mathcal{F})$ is the bracketing $L_2$-metric entropy for the class \(\mathcal{F}= \{\frac{1}{m}\sum_{j=1}^mf(\Z_j, \theta): \theta \in \Theta\}\). Then, 
\[
\mathbb{P}^{*}\left(\sup_{f\in\mathcal{F}} \frac{1}{m}\sum_{j=1}^{m}\bigl(f(\Z_j)-\E f(\Z_j)\bigr)\geq 
\varepsilon_m\right) \leq 3\exp\left(-(1-\gamma)\varphi(\varepsilon_m,v,U)\right),
\]
where $\mathbb{P}^{*}$ denotes the outer probability measure associated with the independent samples \(\Z_1,\dots,\Z_m\).
\end{lemma}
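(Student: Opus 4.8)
The plan is to derive this uniform deviation bound by combining a pointwise Bernstein tail estimate with a bracketing-entropy chaining argument, in the spirit of the sieve-convergence analysis of \cite{shen1994convergence}. Throughout I would write $Y_j(f) = f(\Z_j) - \E f(\Z_j)$ and $S_m(f) = \frac{1}{m}\sum_{j=1}^m Y_j(f)$, so that the object of interest is $\sup_{f \in \mathcal{F}} S_m(f)$. The first step is a pointwise bound: for a fixed $f$, the summands $Y_j(f)$ are independent, mean zero, and satisfy the stated moment condition $\E|Y_j(f)|^k \le \tfrac12 k!\, v_j U^{k-2}$, so Bernstein's inequality for independent, non-identically distributed variables yields $\mathbb{P}(S_m(f) \ge t) \le \exp(-\varphi(t,v,U))$ with exactly the $\varphi$ of the statement; the aggregate proxy $v \ge \frac1m\sum_j v_j$ supplies a single variance scale and the $\tfrac13$ term tracks the envelope $U$. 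This identifies $\varphi(\varepsilon_m,v,U)$ as the exponent I must reproduce uniformly.

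Next I would pass from a single $f$ to the supremum by bracketing. The plan is to cover $\mathcal{F}$ in $L_2$ by brackets at a geometric ladder of radii $\delta_\ell = 2^{-\ell}\sqrt{v}$, stopping at the finest scale $\delta_L \asymp \gamma\varepsilon_m/8$, where the number of brackets at level $\ell$ is $\exp(H(\delta_\ell,\mathcal{F}))$. Attaching to each $f$ its bracketing approximants $f_\ell$, I would telescope
\[
S_m(f) = S_m(f_0) + \sum_{\ell} \bigl(S_m(f_{\ell+1}) - S_m(f_\ell)\bigr) + \text{(remainder)},
\]
with the remainder controlled by the finest bracket width and made smaller than a fixed fraction of $\varepsilon_m$. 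To each chaining increment I would apply the pointwise Bernstein estimate with variance proxy equal to the bracket width $\delta_\ell^2$, then take a union bound over the $\exp(H(\delta_\ell,\mathcal{F}))$ pairs at that level; summing across levels shows that the entropy contribution is governed by $\int_{\gamma\varepsilon_m/8}^{\sqrt v} H^{1/2}(u,\mathcal{F})\,du$.

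The hypotheses are precisely what make this sum close. The entropy integral condition is calibrated so that the accumulated chaining error stays below $\varepsilon_m$ while the union-bound exponents are absorbed into the factor $(1-\gamma)$, and the side constraint $\varepsilon_m \le \gamma v/(4U)$ keeps every increment in the quadratic (variance-dominated) regime of Bernstein rather than the linear (envelope) regime. Collecting the top-level term, the chaining sum, and the remainder --- at most three groups of exponential tails --- then delivers the claimed $3\exp(-(1-\gamma)\varphi(\varepsilon_m,v,U))$.

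The hard part will be the constant bookkeeping in the chaining step: I will need to verify that the geometric allocation of the budget $\varepsilon_m$ across levels, together with the specific factors $2^{10}$ and $\gamma^{3/2}$, makes the sum of the per-level Bernstein exponents no larger than $(1-\gamma)\varphi(\varepsilon_m,v,U)$, uniformly over the non-identically distributed $\Z_j$ through the single proxy $v$. Since the statement is Lemma~11 of \cite{tian2024enhancing}, I could alternatively invoke it verbatim; otherwise I would reproduce the Shen--Wong chaining argument, tracking these constants to the end.
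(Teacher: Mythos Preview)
Your sketch is correct, but there is nothing to compare it against: the paper does not prove this lemma at all. It is simply quoted verbatim as Lemma~11 of \cite{tian2024enhancing} and then invoked as a black box in Step~5 of the proof of Theorem~\ref{thm1}. So the paper's ``proof'' is exactly the citation you mention in your last sentence.

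What you have outlined---pointwise Bernstein for independent non-i.d.\ summands, then bracketing chaining at dyadic scales from $\sqrt{v}$ down to $\gamma\varepsilon_m/8$, with the entropy integral absorbing the union bounds and the side condition $\varepsilon_m \le \gamma v/(4U)$ keeping every increment in the sub-Gaussian regime---is precisely the Shen--Wong argument \cite{shen1994convergence} that underlies the cited result, and the constants $2^{10}$, $\gamma^{3/2}$, and the prefactor $3$ are inherited from that analysis. If you want to be self-contained you would indeed have to track those constants through the chaining, but for the purposes of matching this paper you can stop at the citation.
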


\begin{proof}[Proof of Theorem \ref{thm1}]
The proof is organized into six steps.

\medskip
\textbf{Step 1: Notation.} \\
We first introduce the notation for the various risks. Define the evaluation risk by
$R_Q = \mathbb{E}_Q\, l(\theta,\Z)$.
The CoDSA empirical loss defined in \eqref{CoDSA1} is  
$L_{m,n_r}(\theta,\Z_{c}) = \frac{1}{m+n_{r}} \sum_{k=1}^K\bigl( \sum_{j=1}^{m_k} \ell(\theta, \tilde \Z_{kj})
+\sum_{j=1}^{n_{r,k}} \ell(\theta, \Z_{kj})\bigr)$.
Its corresponding risk is
$R_C(\theta) = \mathbb{E}\, L_{m,n_r}(\theta,\Z_c)$.
We denote by
$\theta_C = \arg\min_{\theta \in \Theta} R_C(\theta)$
the minimizer of $R_C$, and define the excess risk relative to $\theta_C$ as
$\rho_C(\theta,\theta_C) = R_C(\theta) - R_C(\theta_C)$.

\medskip
\textbf{Step 2: Bounding the discrepancy between $R_C$ and $R_Q$.} \\
We express the difference as
\begin{align}
\label{eq:discrepancy}
   &R_C(\theta)-R_Q(\theta)\nonumber\\
=& \sum_{k=1}^K\left(\frac{m_k}{m+n_{r}}\E_{\tilde \Z|C_k}\ell(\theta,\tilde\Z)+\frac{n_{r,k}}{m+n_{r}}\E_{\Z|C_k}\ell(\theta,\Z)
\right)-\sum_{k=1}^K\tilde{\alpha}_k\E_{\Z|C_k}\ell(\theta,\Z)
\nonumber\\
&+\sum_{k=1}^K\tilde{\alpha}_k\E_{\Z|C_k}\ell(\theta,\Z)-\sum_{k=1}^K q_k\E_{\Z|C_k}\ell(\theta,\Z)\nonumber\\
   \leq &\sum_{k=1}^K\frac{m_k}{m+n_r}\Biggl|\E_{\Z|C_k} \ell(\theta,\Z)-\E_{\tilde{\Z}|C_k}\ell(\theta,\tilde{\Z})\Biggr| +  \Biggl|\sum_{k=1}^K (\tilde\alpha_k-q_k)\,\E_{\Z|C_k}\ell(\theta,\Z )\Biggr|
\end{align}
where 
$\tilde{\alpha}_k = \frac{m_k+n_{r,k}}{m+n_r} = \frac{\alpha_k m+p_k(1-r)n}{m+(1-r)n}$.
Since $\ell$ is $\beta$-Lipschitz, by the Kantorovich--Rubinstein duality \citep{kantorovich1958duality}, for each $k$ we have
\begin{equation}
\label{eq:lip-bound}
\Bigl|\mathbb{E}_{\Z|C_k}\,\ell(\theta,\Z)
-\mathbb{E}_{\tilde{\Z}|C_k}\,\ell(\theta,\tilde{\Z)}\Bigr|
\le \beta\, \operatorname{W}(P_k,\hat{P}_k).
\end{equation}
Moreover, using the uniform bound $U$ for $\ell$, the second term in \eqref{eq:discrepancy} satisfies
\begin{equation}
\label{eq:weight-bound}
\Biggl|\sum_{k=1}^K (\tilde{\alpha}_k-q_k)\,\mathbb{E}_{\Z|C_k}\,\ell(\theta,\Z)\Biggr|
\le U\, D_{\lambda},
\end{equation}
where
$D_\lambda=\sum_{k=1}^K |\tilde{\alpha}_k-q_k|$.
Combining \eqref{eq:lip-bound} and \eqref{eq:weight-bound} in \eqref{eq:discrepancy} yields
\begin{eqnarray}
\label{b2}
|R_C(\theta)-R_Q(\theta)|\le \beta\, G_{\lambda}+ U\, D_{\lambda},
\end{eqnarray}
with
$G_{\lambda} = \sum_{k=1}^K \frac{\alpha_k}{1+(1-r)n/m}\,\tau_{k,n_g}$.

\medskip
\textbf{Step 3: Decomposition of the excess risk.} \\
By \eqref{b2}, the overall excess risk can be decomposed as follows:
\begin{align}
\label{eq:excess-risk-decomp}
\rho(\hat{\theta}_\lambda,\theta_0)
&= R_Q(\hat{\theta}_\lambda)-R_Q(\theta_0) \nonumber\\[1mm]
&=\Bigl[R_Q(\hat{\theta}_\lambda)-R_C(\hat{\theta}_\lambda)\Bigr]+\Bigl[R_C(\hat{\theta}_\lambda)-R_C(\theta_C)\Bigr]
+\Bigl[R_C(\theta_C)-R_Q(\theta_0)\Bigr]\nonumber\\[1mm]
&\le \rho_C(\hat{\theta}_\lambda,\theta_C)
+ U\, D_{\lambda} + \beta\, G_{\lambda}
+\Bigl[R_C(\theta_C)-R_Q(\theta_0)\Bigr].
\end{align}
Next, decompose
\begin{align}
\label{eq:approx-error}
R_C(\theta_C)-R_Q(\theta_0)
&=\Bigl[R_C(\theta_C)-R_C(\theta_Q)\Bigr]
+\Bigl[R_C(\theta_Q)-R_Q(\theta_Q)\Bigr] \nonumber\\[1mm]
&\quad+\Bigl[R_Q(\theta_Q)-R_Q(\theta_0)\Bigr] \nonumber\\[1mm]
&\le U\, D_{\lambda}+ \Bigl(R_Q(\theta_Q)-R_Q(\theta_0)\Bigr),
\end{align}
since $\theta_C$ minimizes $R_C$, so that $R_C(\theta_C)-R_C(\theta_Q) \le 0$. Plugging \eqref{eq:approx-error} into \eqref{eq:excess-risk-decomp} yields
\[
\rho(\hat{\theta}_\lambda,\theta_0)
\le \rho_C(\hat{\theta}_\lambda,\theta_C)
+ \Bigl(R_Q(\theta_Q)-R_Q(\theta_0)\Bigr)
+ 2U\, D_{\lambda} + 2\beta\, G_{\lambda}.
\]
Thus, if we choose
\[
\varepsilon > \max\{\rho(\theta_Q,\theta_0),\, D_{\lambda},\, G_{\lambda}\},
\]
then
$\rho(\hat{\theta}_\lambda,\theta_0) \le \rho_C(\hat{\theta}_\lambda,\theta_C) + (c_e-1)\,\varepsilon$,
with $c_e = 2+2U+2\beta$.
This implies that
\[
\{\rho(\hat{\theta}_\lambda,\theta_0) \ge c_e\,\varepsilon\} \subseteq \{\rho_C(\hat{\theta}_\lambda,\theta_C) \ge \varepsilon\}.
\]

\medskip
\textbf{Step 4: Empirical process bound.} \\
Partition the parameter space into slices
\[
A_l = \Bigl\{\theta\in\Theta:\, 2^l\varepsilon < \rho_C(\theta,\theta_C) \le 2^{l+1}\varepsilon\Bigr\},\quad l=0,1,\ldots.
\]
By the definition of $\hat{\theta}_\lambda$, we have
\[
P\Bigl(\rho_C(\hat{\theta}_\lambda,\theta_C) \ge \varepsilon\Bigr)
\le \sum_{l=0}^{\infty} P\Bigl(\sup_{\theta\in A_l} \nu\Bigl(L_{m,n_r}(\theta,\Z_c)-L_{m,n_r}(\theta_C,\Z_c)\Bigr) > 2^l\varepsilon\Bigr),
\]
where $\nu$ denotes the empirical process based on $\Z_c$.

\medskip
\textbf{Step 5: Bounding the variance.} \\
For $\theta\in A_l$, note that by independence, the variance term can be expressed as,
\[
V(\theta,\theta_C)=\frac{1}{m+n_r}\sum_{k=1}^K\left[ m_k\,\operatorname{Var}_{\tilde{\Z}|C_k}\Bigl(\ell(\theta,\Z)-\ell(\theta_C,\Z)\Bigr)
+ n_{r,k}\,\operatorname{Var}_{\Z|C_k}\Bigl(\ell(\theta,\Z)-\ell(\theta_C,\Z)\Bigr)\right].
\]
First, we have $V(\theta,\theta_C)\leq V(\theta,\theta_0) + V(\theta_C,\theta_0)$, where $V(\theta,\theta_0)$ is defined in the same form of $V(\theta,\theta_C)$ replacing $\theta_C$ by $\theta_0$.
By the $\beta$-Lipschitz property and the uniform bound $U$, we obtain
\begin{equation}
\label{eq:var-bound}
\Bigl|\operatorname{Var}_{\tilde{\Z}|C_k}\Bigl(\ell(\theta,\Z)-\ell(\theta_0,\Z)\Bigr)
-\operatorname{Var}_{\Z|C_k}\Bigl(\ell(\theta,\Z)-\ell(\theta_0,\Z)\Bigr)\Bigr|
\le 4U\beta\, \operatorname{W}(P_k,\tilde{P}_k).
\end{equation}
Using a decomposition analogous to that in Step 2, it follows that
\begin{align*}
V(\theta,\theta_0)
&\le \sum_{k=1}^K \Biggl[ \tilde{\alpha}_k\,\operatorname{Var}_{\Z|C_k}\Bigl(\ell(\theta,\Z)-\ell(\theta_0,\Z)\Bigr)
+ 4U\beta\, \frac{m_k}{m+n_r}\,\operatorname{W}(P_k,\tilde{P}_k) \Bigg]\\[1mm]
&\le \operatorname{Var}_Q\Bigl(\ell(\theta,\Z)-\ell(\theta_0,\Z)\Bigr)
+ 4U\beta\sum_{k=1}^K \frac{m_k}{m+n_r}\,\operatorname{W}(P_k,\tilde{P}_k)
+ 4U^2\sum_{k=1}^K |q_k-\alpha_k|\\[1mm]
&\le c_v\,\rho(\theta,\theta_0)
+ 4U\beta\sum_{k=1}^K \frac{m_k}{m+n_r}\,\operatorname{W}(P_k,\tilde{P}_k)
+ 4U^2\sum_{k=1}^K |q_k-\alpha_k|\\[1mm]
&\le c_v\,\rho_C(\theta,\theta_C)
+ \Bigl(c_v\,c_e+4U^2+4U\beta\Bigr)\,\varepsilon,
\end{align*}
where we have assumed $\varepsilon \ge \max\{D_{\lambda},\, G_{\lambda}\}$. 
Then for $\theta\in A_l$, we have 
\begin{align*}
V(\theta,\theta_C)\leq c_v2^{l+1}\varepsilon+2\Bigl(c_v\,c_e+4U^2+4U\beta\Bigr)\,\varepsilon.
\end{align*}
 
Then, for each $l$, using the bound above for $v$ in Lemma~\ref{large-d}, we obtain
\[
\mathbb{P}\Bigl(\sup_{\theta\in A_l} \nu\Bigl(L_{m,n_r}(\theta,\Z_c)-L_{m,n_r}(\theta_C,\Z_c)\Bigr) > 2^l\varepsilon\Bigr)
\le 3\exp\Biggl(-\frac{(1-\gamma)}{c_d}\, (m+n_r)\,2^l\varepsilon\Biggr),
\]
where 
\[
c_d = 2\Biggl[4\,\Bigl(2c_v(c_e+1)+8U(\beta+U)\Bigr)
+\frac{U}{3}\Biggr].
\]

Summing over $l\ge0$, we deduce that
\begin{equation}
\label{eq:empirical-bound}
\mathbb{P}\Bigl(\rho_C(\hat{\theta}_\lambda,\theta_C) \ge \varepsilon\Bigr)
\le 4\exp\Biggl(-\frac{(1-\gamma)}{c_d}\, (m+n_r)\,\varepsilon\Biggr).
\end{equation}

\medskip
\textbf{Step 6: Incorporating the generation error.} \\
Define the event
$\mathcal{S} = \Bigl\{\operatorname{W}(P_k,\tilde{P}_k) \le \tau_{k,n_g},\quad k=1,\ldots,K\Bigr\}$.
By Assumption~\ref{A-3} and applying a union bound, we have
\[
\mathbb{P}\bigl(\mathcal{S}^c\bigr) \le \sum_{k=1}^K \exp\Bigl(-c_g\, n_g\, \tau_{k,n_g}\Bigr).
\]
Conditioning on $\mathcal{S}$ and combining with \eqref{eq:empirical-bound}, it follows that with probability at least
\[
1-\sum_{k=1}^K \exp\Bigl(-c_g\, n_g\, \tau_{k,n_g}\Bigr)-4\exp\Biggl(-\frac{(1-\gamma)}{c_d}\, (m+n_r)\,\varepsilon\Biggr),
\]
we have
$\rho(\hat{\theta}_\lambda,\theta_0) \le c_e\,\varepsilon$.
This completes the proof.
\end{proof}

\begin{proof}[Proof of Theorem \ref{thm2}]
Theorem \ref{thm2} is a direct result combining the results of Lemma \ref{thm_diff_general}, Lemma \ref{l_classification}, and Theorem \ref{thm1}. 
\end{proof}

\begin{lemma}[Classification error]
\label{l_classification}
When Assumption \ref{A_f} holds, if we set the classifier as $\Theta=\mathrm{NN}(d_x,1,\mathbb{W},\mathbb{L},\mathbb{S},\mathbb{B},\E)$ with $\mathbb{W}=c_W(W\log W)$, $\mathbb{L}=c_L(L\log L)$, $\mathbb{S}=c_S\mathbb{W}^2\mathbb{L}$, $\mathbb{B}=B_x$ and $\E=(WL)^{c_E}$ with some positive constants $c_W$, $c_L$, $c_S$ and $c_E$ dependent on $d_x$ and $r_{\theta}$, it holds that $
\rho(\theta_Q,\theta_0)=O((WL)^{-\frac{2r_{\theta}}{d_x}})
$ and $\delta_{m+n_{r}}=O(\sqrt{\frac{(WL)^2\log^5 (WL)}{m+n_{r}}})$. If we set $\rho(\theta_Q,\theta_0)=\delta_{m+n_{r}}$ to determine $WL$,
after ignoring the logarithmic term, the optimal bound is obtained by $WL\asymp (m+n_r)^{\frac{d_x}{2(d_x+2{r_{\theta}})}}$. This yields
$\xi_{m+n_{r}} =O((m+n_r)^{-\frac{r_{\theta}}{d_x+2r_{\theta}}} \log^{\beta_{\theta}} (m+n_r))$, where $\beta_{\theta}=\frac{5}{2}$. 
\end{lemma}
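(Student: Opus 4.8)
The plan is to instantiate the two $\Theta$-dependent quantities appearing in Theorem~\ref{thm1}, namely the approximation error $\rho(\theta_Q,\theta_0)$ and the estimation error $\delta_{m+n_r}$, for the specific ReLU class $\Theta=\mathrm{NN}(d_x,1,\mathbb{W},\mathbb{L},\mathbb{S},\mathbb{B},\E)$, and then balance them to read off $\xi_{m+n_r}=\max(\rho(\theta_Q,\theta_0),\delta_{m+n_r})$; the indices $D_\lambda$ and $G_\lambda$ do not enter here, since the lemma addresses only the first trade-off. \textbf{For the approximation term}, under Assumption~\ref{A_f} we have $\theta_0\in\mathcal{C}^{r_\theta}([-B_x,B_x]^{d_x},[0,1],B)$, so I would invoke a standard deep ReLU approximation theorem for H\"older functions (of the kind used in \cite{tian2024enhancing}): with width $\mathbb{W}\asymp W\log W$ and depth $\mathbb{L}\asymp L\log L$ there is a network $\theta_Q\in\Theta$ with $\|\theta_Q-\theta_0\|_\infty=O((WL)^{-r_\theta/d_x})$, the choices $\mathbb{B}=B_x$ and $\E=(WL)^{c_E}$ accommodating the required weights. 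I would then pass to excess risk by a self-calibration argument for the logistic loss: the pointwise excess risk equals the Kullback--Leibler divergence $\mathrm{KL}(\theta_0(\x)\Vert\theta_Q(\x))$, and the uniform separation $0<\underline c\le\theta_0\le\overline c<1$ keeps $\theta_0$, and hence $\theta_Q$ for small enough error, away from the endpoints, so a Taylor expansion makes this divergence comparable to $(\theta_Q-\theta_0)^2$. Integrating over $Q$ gives $\rho(\theta_Q,\theta_0)\lesssim\|\theta_Q-\theta_0\|_\infty^2=O((WL)^{-2r_\theta/d_x})$, and the same inequality simultaneously verifies Assumption~\ref{A-2}.

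\textbf{For the estimation term}, I would bound the bracketing $L_2$-entropy $H(u,\mathcal{F})$ of the induced loss class and solve~\eqref{entropy}. Because the logistic loss is Lipschitz in $\theta$ on the relevant range, the entropy of $\mathcal{F}$ is controlled by that of $\Theta$, for which a parameter-counting (pseudo-dimension) bound for ReLU networks gives $H(u,\mathcal{F})=O\big(\mathbb{S}\,\mathbb{L}\,\log(\E\,\mathbb{L}/u)\big)$. Substituting $\mathbb{S}=c_S\mathbb{W}^2\mathbb{L}$ with $\mathbb{W}\asymp W\log W$, $\mathbb{L}\asymp L\log L$ turns this into $O\big((WL)^2\log^4(WL)\,\log(1/u)\big)$. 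Inserting $H^{1/2}$ into~\eqref{entropy}, the integral is dominated by its upper endpoint and contributes an additional factor $\log(1/\delta)\asymp\log(WL)$, so solving for the smallest admissible rate yields $\delta_{m+n_r}=O\!\big(\sqrt{(WL)^2\log^5(WL)/(m+n_r)}\big)$, which explains the fifth power of the logarithm.

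\textbf{Balancing and concluding}, ignoring logarithms the equation $\rho(\theta_Q,\theta_0)=\delta_{m+n_r}$ reads $(WL)^{-2r_\theta/d_x}\asymp WL\,(m+n_r)^{-1/2}$, i.e.\ $(WL)^{(d_x+2r_\theta)/d_x}\asymp(m+n_r)^{1/2}$, which yields $WL\asymp(m+n_r)^{d_x/(2(d_x+2r_\theta))}$. Substituting back into either side gives $\xi_{m+n_r}=O\big((m+n_r)^{-r_\theta/(d_x+2r_\theta)}\log^{\beta_\theta}(m+n_r)\big)$ with $\beta_\theta=5/2$, the value $5/2$ being half of the fifth power carried by $\delta_{m+n_r}$.

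The step I expect to be the main obstacle is the logarithmic bookkeeping in the estimation term: one must track how the $\log W$ and $\log L$ factors hidden in $\mathbb{W}$ and $\mathbb{L}$ propagate through $\mathbb{S}\,\mathbb{L}$, combine with the $\log(1/u)$ of the entropy bound, and survive the entropy integral to land on exactly $\log^5$, so that $\beta_\theta=5/2$ rather than some other half-integer. A secondary technical point is ensuring in the approximation step that $\theta_Q$ can be taken with range strictly inside $(0,1)$, so that the logistic loss and its local strong-convexity constant stay finite; this is precisely where the separation of $\theta_0$ from $\{0,1\}$ by $\underline c,\overline c$ is needed.
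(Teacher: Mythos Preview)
Your proposal follows the same high-level strategy as the paper: invoke a ReLU approximation theorem for H\"older functions to control $\rho(\theta_Q,\theta_0)$, bound the bracketing entropy of the network class by $O(\mathbb{S}\,\mathbb{L}\log(\E\mathbb{W}\mathbb{L}/u))$ to control $\delta_{m+n_r}$, and then balance. The entropy step and the balancing are essentially identical to the paper's (it cites Lemma~C.2 of \cite{oko2023diffusion} and Lemma~D.8 of \cite{fu2024unveil} for the entropy, and \cite{lu2021deep,huang2022error} for the approximation).

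The one genuine difference is in how you convert sup-norm approximation into excess risk. You pass through the pointwise KL and a Taylor expansion to obtain the quadratic calibration $\rho(\theta_Q,\theta_0)\lesssim\|\theta_Q-\theta_0\|_\infty^2$, and then pair this with a sup-norm rate of $(WL)^{-r_\theta/d_x}$. The paper instead uses only the crude linear bound $\rho(\theta_Q,\theta_0)\le c\,\|\theta_Q-\theta_0\|_\infty$ (valid since $\theta_0$ is bounded away from $0,1$), but compensates by citing the sharper Lu--Shen--Yang--Zhang rate $\|\theta_Q-\theta_0\|_\infty=O((WL)^{-2r_\theta/d_x})$ that exploits width and depth simultaneously. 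Both routes land on $\rho(\theta_Q,\theta_0)=O((WL)^{-2r_\theta/d_x})$. Your route has the side benefit of making the verification of Assumption~\ref{A-2} explicit; the paper's is shorter but leans more heavily on the specific exponent in \cite{lu2021deep}. Be aware that if you were to combine the Lu et~al.\ rate with your quadratic calibration you would overshoot to $(WL)^{-4r_\theta/d_x}$, so the two ingredients must be matched consistently.
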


\begin{proof}
Since $\theta_0$ is well bounded from Assumption \ref{A_f}, we can bound the $\rho(\theta_Q,\theta_0)$ by the sup-norm of $\theta_Q-\theta_0$, $\rho(\theta_Q,\theta_0)\leq c \|\theta_Q-\theta_0\|_{\infty}$, with some positive constant $c$. 
The first term $\rho(\theta_Q,\theta_0)$ is the approximation error from the approximation theory (Theorem 1.1 in \cite{lu2021deep} and Lemma 11 in \cite{huang2022error}) of neural networks for the smooth function class. The second entropy bound is from the Lemma C.2 \citep{oko2023diffusion} and Lemma D.8 \citep{fu2024unveil}, $H(\cdot,\Theta)$
is bounded by the hyperparameters of depth $\mathbb{L}$, width $\mathbb{W}$, number of parameters $\mathbb{S}$, parameter bound $\E$ and the diameter of the input domain, $H(u,\Theta)= O(\mathbb{S}\mathbb{L}\log(\E\mathbb{W}\mathbb{L}/u))$. Plug in the parameters of $\Theta$ and solve the entropy inequality, so we obtain $\delta_{m+n_{r}}$. This completes the proof.
\end{proof}

\section{Experiment details}
\label{exp-details}

This section provides additional details regarding the experimental setups outlined in Section~\ref{sec:numerical}.

\noindent \textbf{Simulation: Imbalanced Classification.} For both of the SMOTE and ADASYN methods, the tuning parameters include the synthetic sample ratio $(m/n)$ and the minority group ratio $(\alpha)$. For the two CoDSA methods, an additional tuning parameter, the split ratio $(r)$, is considered. Parameter optimization is conducted via grid search to minimize cross-entropy loss on the validation set. The grid comprises:
\[
\{(r, \alpha, \tfrac{m}{n}) : r \in \{0.1, 0.2, \dotsc, 1\},\; \alpha \in \{0.1, 0.2, \dotsc, 0.9\},\; \tfrac{m}{n} \in \{0.1, 0.2, \dotsc, 2\}\}.
\]

The encoder and decoder are implemented as a three-layer neural network with ReLU activations, 256 hidden units per layer, and a latent dimension of 3, trained using squared loss. In the diffusion model, $\theta_u$ is set as a 10-layer neural network with 1024 hidden neurons in each layer. Time embedding transforms timesteps into a 128-dimensional representation using a two-layer neural network. The diffusion process comprises 1000 steps, with noise levels ranging between $0.0001$ and $0.02$. The Adam optimizer is used to train the diffusion model for 5000 epochs with a learning rate of $0.0001$. The classification baseline employs a three-layer neural network with ReLU activations in the first two layers, each with 128 hidden neurons.

\noindent \textbf{Simulation: Regression with undersampled regions.} 
Parameter optimization employs grid search to minimize the cross-entropy loss on the validation set, using the identical parameter grid as below:
$
\{(r, \alpha, \tfrac{m}{n}) : r \in \{0.1, 0.2, \dotsc, 1\},\; \alpha \in \{0.1, 0.2, \dotsc, 0.9\},\; \tfrac{m}{n} \in \{0.1, 0.2, \dotsc, 2\}\}$.

The encoder and decoder are implemented as a two-layer ReLU neural network with 128 hidden neurons and a latent dimension of 3, trained using squared loss. In the diffusion model, $\theta_u$ is defined as a five-layer neural network with 512 hidden neurons in each layer. The timesteps are encoded into a 64-dimensional embedding using a two-layer neural network. The diffusion process consists of 1000 steps, with noise levels ranging from $0.0001$ to $0.02$. The model is trained using the Adam optimizer for 5000 epochs with a learning rate of $0.0001$. For the regression task, we use a \texttt{RandomForest} regressor with 100 trees and default settings as implemented in \cite{scikit-learn}. For the SMOGN method, we use the validation set to select the best perturbation size in $\{0.02,0.04,0.06\}$.

\noindent \textbf{Sentiment analysis.} 
To generate review contexts, we condition the diffusion model on the review rating. Specifically, we first apply SentenceBERT to encode the original text into text embeddings. These embeddings, together with the rating vectors, are used to train a conditional diffusion model.

The timesteps are encoded into a time embedding using a two-layer fully connected neural network with a hidden dimension of 2048. At the core of the reverse diffusion process is a stack of ten residual blocks. Each block performs a linear transformation, followed by layer normalization and a ReLU activation. After iterative refinement through these blocks, a final projection layer maps the output back to the text embedding space. This enables the model to predict the denoised text embedding at each step of the diffusion process. The denoised embeddings are then decoded into natural language using the \texttt{vec2text} model (\url{https://huggingface.co/ielabgroup/vec2text_gtr-base-st_corrector}) to generate the final review texts.

The diffusion model is trained for 1000 epochs using the Adam optimizer with a fixed learning rate of $1 \times 10^{-4}$ and a batch size of 256.

For the sentiment classification task, we employ a multi-class logistic regression model as the classifier. Text reviews are transformed into numerical features using a TF-IDF vectorizer.

\noindent \textbf{Age prediction.}
We use Stable Diffusion 2.0 as the generator to generate face images from prompts. Specifically, the prompt is set as "A \{age\}-year-old \{gender\} of \{ethnicity\}", where "age" is the age sampled from the age range from the data uniformly, the gender is male or female, and the ethnicity belongs to the set $\{\text{White},\text{Black},\text{Asian},\text{Indian},\text{other}\}$. To fine-tune the diffusion model, we adopt the PEFT-LoRA \citep{peft} to fine-tune the U-net in the stable diffusion for 20000 steps. The age regressor is set as a convolutional neural network designed for image inputs of size \(64 \times 64\) with 3 channels (e.g., RGB images). It consists of:
Four convolutional layers with ReLU activations and max pooling;
Two fully connected layers with ReLU and dropout; A final output layer for regression.
The network progressively reduces the spatial resolution through pooling while increasing feature depth, eventually flattening into a fully connected regressor. The output is a scalar, making the model suitable for continuous-valued predictions.

\begin{table}[h!]
\centering
\caption{Summary of CNN Architecture (Input: \(3 \times 64 \times 64\))}
\begin{tabular}{llc}
\hline
\textbf{Layer Type} & \textbf{Details} & \textbf{Output Shape} \\
\hline
Input & 3-channel image & \(3 \times 64 \times 64\) \\
\hline
Conv2d + ReLU + MaxPool & \(3 \rightarrow 32\), kernel \(3\times3\), pool \(2\times2\) & \(32 \times 32 \times 32\) \\
\hline
Conv2d + ReLU + MaxPool & \(32 \rightarrow 64\), kernel \(3\times3\), pool \(2\times2\) & \(64 \times 16 \times 16\) \\
\hline
Conv2d + ReLU + MaxPool & \(64 \rightarrow 128\), kernel \(3\times3\), pool \(2\times2\) & \(128 \times 8 \times 8\) \\
\hline
Conv2d + ReLU + MaxPool & \(128 \rightarrow 256\), kernel \(3\times3\), pool \(2\times2\) & \(256 \times 4 \times 4\) \\
\hline
Flatten & — & \(4096\) \\
\hline
Linear + ReLU & \(4096 \rightarrow 256\) & 256 \\
\hline
Dropout & \(p = 0.4\) & 256 \\
\hline
Linear + ReLU & \(256 \rightarrow 256\) & 256 \\
\hline
Output (Linear) & \(256 \rightarrow 1\) & 1 \\
\hline
\end{tabular}
\end{table}

\noindent \textbf{Anomaly detection.} The autoencoder is a fully connected neural network designed for image compression and reconstruction on grayscale inputs of size $28 \times 28$, such as MNIST digits. It consists of two main components: an encoder and a decoder.

The encoder first flattens the input image into a 784-dimensional vector. This is followed by two fully connected layers: the first maps the input to a 128-dimensional hidden representation with ReLU activation, and the second compresses it to a 32-dimensional latent vector, also using ReLU.

The decoder reconstructs the image from this 32-dimensional latent code. It begins with a fully connected layer that maps the latent space back to 128 dimensions with ReLU activation, and then a second linear layer that projects it to 784 dimensions. A sigmoid activation is applied at the final layer to ensure that the reconstructed pixel values lie in the range \([0, 1]\). The output can be reshaped externally back to the original image size of \(28 \times 28\).

In fine-tuning the diffusion model to generate USPS data, we use the Adam optimizer with a learning rate of 0.0001 for 20 epochs.

\section{Additional simulations}

\subsection{Sample split.} 
\label{sample-split}
\noindent \textbf{Cross-fit v.s. single-fit.} A natural question is whether cross-fitting can enhance Algorithm 1 (CoDSA). For a fixed split ratio $r=0.8$, we test a five-fold scheme: partition the real data into \(K\) folds, train the generator on \(K\!-\!1\) folds, generate synthetics, fit the estimator on the held-out fold plus synthetics, and average the \(K\) estimates. 

In the regression non-transfer setting of Section 4.2 with split ratio \(r=0.8\), cross-fitting lowered variance and improved RMSE by $1.53\%$ percentage point,as suggested by Table~\ref{tab:cross} and Figure~\ref{fig:cross} below. However, $K$-fold cross-fitting multiplies runtime and peak GPU memory requirements and introduces cross-fold dependencies among the synthetic data, complicating risk analysis; thus, we retain the simpler single-split strategy in the main text.

\begin{table}[h]
\centering
\caption{RMSE of cross-fit versus single-split across hyperparameter values at split ratio $r=0.8$; all 
other parameters tuned by cross-validation.}
\label{tab:cross}
\begin{tabular}{lcc}
\hline
        & \textbf{Cross-fit} & \textbf{Single-fit} \\
\hline
RMSE    & 1.0560 (0.2320)      & 1.0724 (0.3452)       \\
\hline
\end{tabular}
\end{table}

\begin{figure}[ht]
\centering
\includegraphics[width=0.9\textwidth]{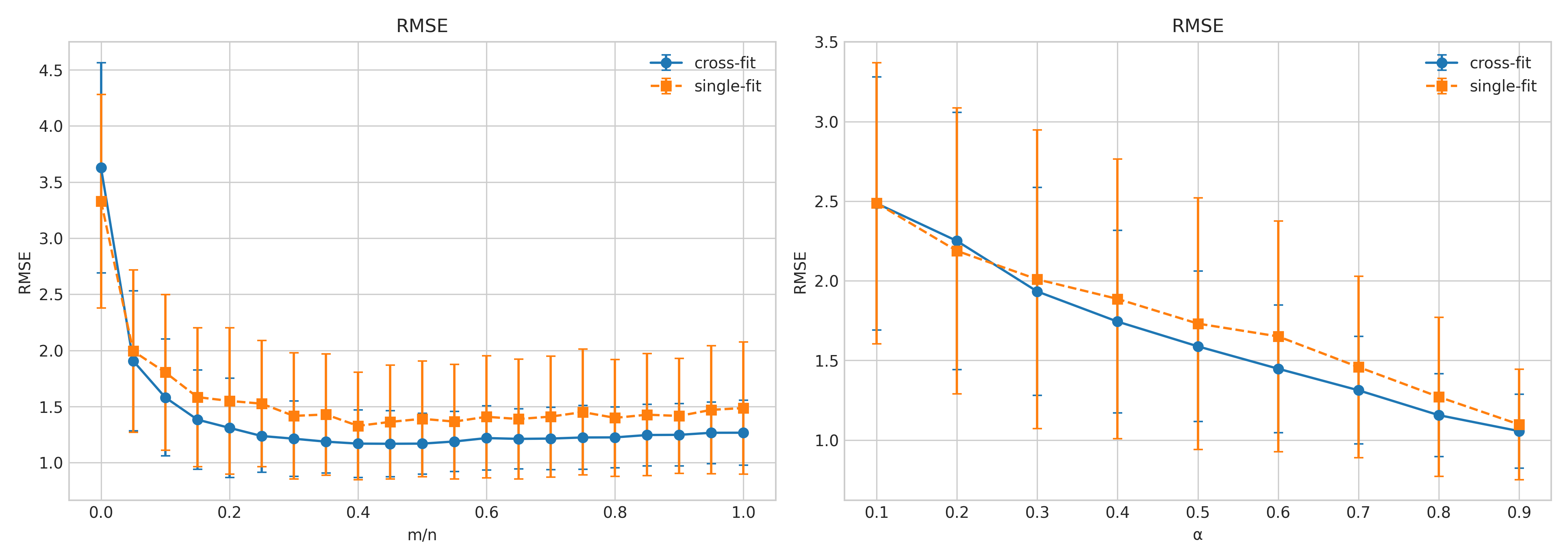}
\caption{Comparison of RMSE between cross-fit and single-fit strategies as a function of $m/n$ and $\alpha_1$ while $r=0.8$.\label{fig:cross}}
\end{figure}

\subsection{Sensitivity to pretraining sample size.}
\noindent \textbf{The Impact of Pretraining Data Size.}
In this experiment, we assess CoDSA’s performance on the regression task described in Section 4.2 as a function of the pretraining dataset size. We vary the source data size over \{1000, 3000, 6000, 10000\} samples. Figure~\ref{fig:sen} presents boxplots of the test RMSE across 10 replicates. As the number of pretraining samples increases, both the stability and accuracy of CoDSA improve. This confirms that larger pretraining datasets from a similar source task enable the construction of higher-fidelity generative models, which then enhance downstream CoDSA performance.

\begin{figure}[ht]
\centering
\includegraphics[width=0.5\textwidth]{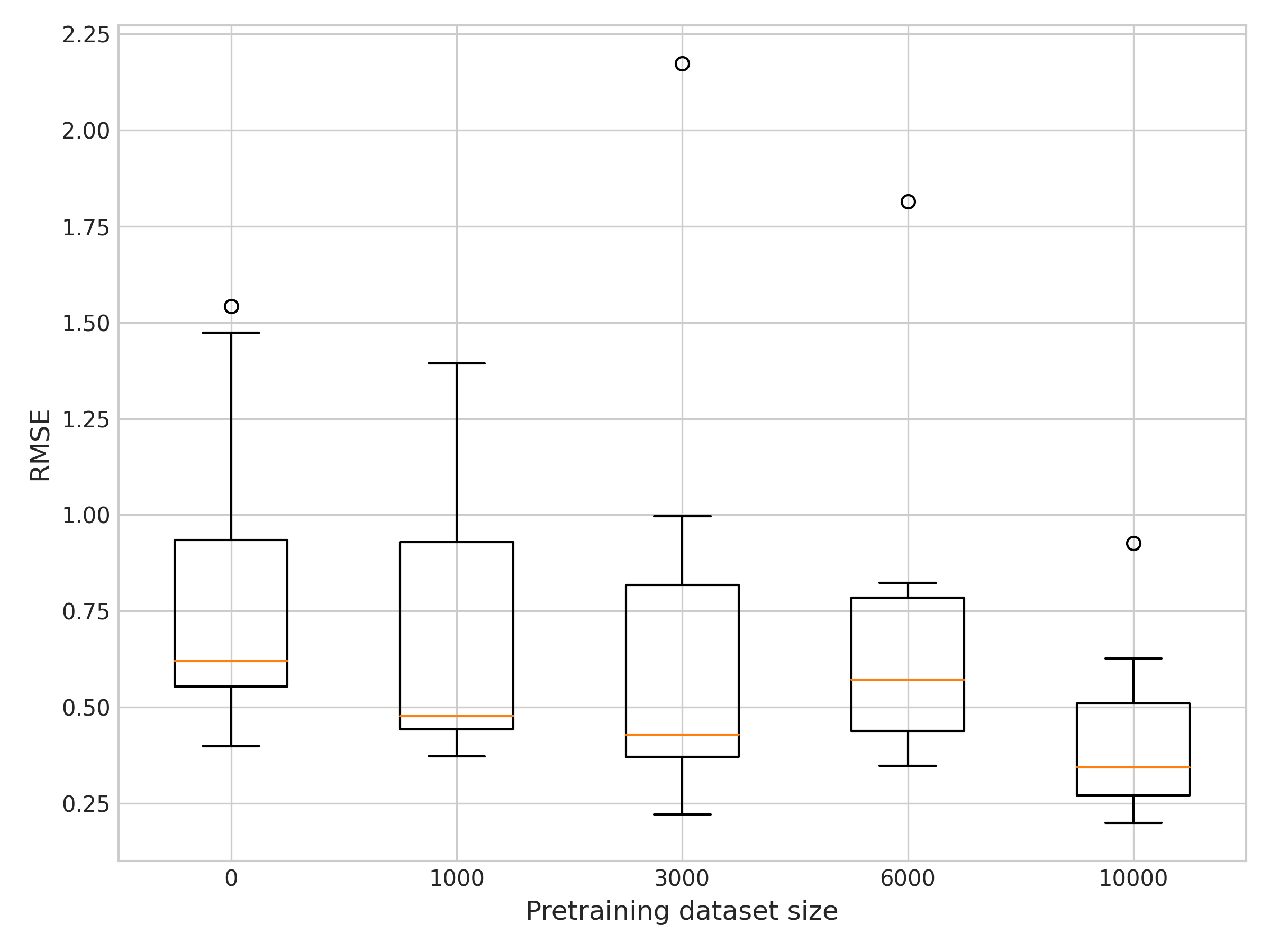}
\caption{Test RMSE under varying pretraining dataset sizes.}
\label{fig:sen}
\end{figure}

\bibliographystyle{Chicago}

\end{document}